\documentclass[journal]{IEEEtran}
\twocolumn \def\twocolumnmode{} 

\usepackage{cite} 
\usepackage{amsmath}%
\usepackage{amsfonts}%
\usepackage{amssymb}%
\usepackage{graphicx}
\newtheorem{theorem}{Theorem}

\newtheorem{lemma}[theorem]{Lemma}

\newtheorem{remark}[theorem]{Remark}

\newenvironment{proof}[1][Proof]{\textbf{#1.} }{\ \rule{0.5em}{0.5em}}

\usepackage{epstopdf}

\usepackage{pdflscape}
\usepackage{multirow}

\usepackage{pgf}
\usepackage{tikz}
\usetikzlibrary{arrows,automata}
\usepackage[latin1]{inputenc}
\usepackage{verbatim}

\newcommand{\Ts}{T_s}  
 

\newcommand{\nsymb}{{b}}		
\newcommand{\nsamp}{{n}}		

\newcommand{\X}{\textbf{X}}
\newcommand{\Y}{\textbf{Y}}

\newcommand{\x}{\textbf{x}}
\newcommand{\y}{\textbf{y}}

\newcommand{\SNR}{\textsf{SNR}}
\newcommand{\Xsymbol}{{X}}
\newcommand{\xsymbol}{{x}}
\newcommand{\Xsymb}[1]{{\Xsymbol}_{#1}}
\newcommand{\xsymb}[1]{{\xsymbol}_{#1}}

\newcommand{\R}{{R}}
\newcommand{\erfc}{\text{erfc}}
\newcommand{\erf}{\text{erf}}



\usepackage{scalefnt}		

\newcommand{\Psiv}{\boldsymbol{\Psi}}
\newcommand{\Xsamp}{\textsf{X}}

\renewcommand{\H}{\textsf{H}} 

\usepackage{enumitem}

\ifCLASSINFOpdf
\else
\fi
\hyphenation{op-tical net-works semi-conduc-tor}

\begin{document}
%
\title{Models and Information Rates for \\Wiener Phase Noise Channels}
%
%
%

\author{Hassan~Ghozlan,~\IEEEmembership{Member,~IEEE},
        and~Gerhard~Kramer,~\IEEEmembership{Fellow,~IEEE}
\thanks{
The work of H. Ghozlan was supported by a USC Annenberg Fellowship and 
the National Science Foundation (NSF) under Grant CCF-09-05235.
The work of G. Kramer was supported by an Alexander von Humboldt Professorship endowed by
the German Federal Ministry of Education and Research, as well as by the NSF under Grant CCF-09-05235.
Part of the material in this paper was presented
at the IEEE International Symposium on Information Theory, Istanbul, Turkey, July, 2013,
at the Global Communications Conference, Atlanta, GA, December, 2013 and
at the IEEE International Symposium on Information Theory, Honolulu, HI, June/July, 2014.}%
\thanks{H. Ghozlan was with the Department
of Electrical Engineering, University of Southern California, Los Angeles,
CA 90089, USA.
He is now with Intel Corporation, Hillsboro, OR 97124, USA.
(e-mail: hassan.ghozlan@intel.com).}
\thanks{G. Kramer is with the Institute for Communications Engineering, 
Technical University of Munich, 80333 Munich, Germany (email: gerhard.kramer@tum.de).}
\thanks{Copyright (c) 2014 IEEE. Personal use of this material is permitted. 
However, permission to use this material for any other purposes must be obtained from the IEEE 
by sending a request to pubs-permissions@ieee.org.}%
}

%
%

\markboth{Journal,~Vol.~, No.~, Month~Year}%
{Shell \MakeLowercase{\textit{et al.}}: Bare Demo of IEEEtran.cls for Journals}
%



\maketitle

%
\IEEEpeerreviewmaketitle

\newcommand{\codered}[1]{{\color[rgb]{0,0,0} {#1}}}
\newcommand{\TexFigPath}{./fig-tikz-new-wmestimate/}			

\begin{abstract}
A waveform channel is considered where the transmitted signal is corrupted by Wiener phase noise
and additive white Gaussian noise.
A discrete-time channel model that takes into account the effect of filtering on the phase noise is developed.
The model is based on a multi-sample receiver, i.e.,
an integrate-and-dump filter whose output is sampled at a rate higher than the signaling rate.
It is shown that, at high Signal-to-Noise Ratio (SNR), 
the multi-sample receiver achieves a rate that grows logarithmically with the SNR
if the number of samples per symbol (oversampling factor) grows with the cubic root of the SNR.
Moreover, the pre-log factor is at least 1/2 and can be achieved by amplitude modulation.
For an approximate discrete-time model of the multi-sample receiver,
the capacity pre-log at high SNR is shown to be at least 3/4
if the number of samples per symbol grows with the square root of the SNR. 
The analysis shows that phase modulation achieves a pre-log of at least 1/4 
while amplitude modulation still achieves a pre-log of 1/2.
This is strictly greater than the capacity pre-log of the (approximate) discrete-time Wiener phase noise channel with only one sample per symbol, 
which is 1/2.
Numerical simulations are used to compute lower bounds on the information rates achieved by the multi-sample receiver.
The simulations show that oversampling is beneficial 
for both strong and weak phase noise at high SNR.
In fact, the information rates are sometimes substantially larger than when using commonly-used approximate discrete-time models.
\end{abstract}

\begin{IEEEkeywords}
Phase noise, Wiener process, waveform channel, capacity, oversampling.
\end{IEEEkeywords}

\section{Introduction}
Phase noise arises due to the instability of oscillators \cite{Demir2000} in communication systems
such as satellite communication \cite{CasiniDVBS2PN2004}, 
microwave links \cite{DurisiICC2013} or optical fiber communication \cite{TkachDFB1986}.
The statistical characterization of the phase noise depends on the application.
In systems with phase-locked loops (PLL), 
the residual phase noise follows a Tikhonov distribution \cite{ViterbiPLL1963}.
In Digital Video Broadcasting DVB-S2, an example of a satellite communication system, 
the phase noise process is modeled by the sum of the outputs of two infinite-impulse response filters 
driven by the same white Gaussian noise process \cite{CasiniDVBS2PN2004}.
In optical fiber communication, the phase noise in laser oscillators 
is modeled by a Wiener process \cite{TkachDFB1986}.

A commonly studied \emph{discrete-time} channel model is
\begin{align}
 Y_k = \Xsymb{k} ~ e^{j \Theta_k} + Z_k
\label{eq:dt-pn-ch}
\end{align}
where $\{Y_k\}$ are the output symbols, $\{\Xsymb{k}\}$ are the input symbols, $\{\Theta_k\}$ is the phase noise process and 
$\{Z_k\}$ is additive white Gaussian noise (AWGN).
For example,
Katz and Shamai \cite{Katz2004} studied the model (\ref{eq:dt-pn-ch}) when $\{\Theta_k\}$ is 
independent and identically distributed (i.i.d.) according to $p_{\Theta}(\cdot)$,
when $\Theta$ is uniformly distributed (called a non-coherent AWGN  channel)
and 
when $\Theta$ has a Tikhonov (or von Mises) distribution (called a partially-coherent AWGN channel).
The i.i.d. Tikhonov phase noise models the residual phase error in systems 
with phase-tracking devices, e.g., phase-locked loops (PLL) and ideal interleavers/deinterleavers.
Katz and Shamai \cite{Katz2004} characterized some properties of the capacity-achieving distribution.
Tight lower bounds on the capacities of memoryless non-coherent and partially coherent AWGN channels 
were computed by solving an optimization problem numerically in \cite{Katz2004} and \cite{Hou2003PCAWGN}, respectively.
Lapidoth studied in \cite{LapidothPhaseNoise2002} a phase noise channel (\ref{eq:dt-pn-ch}) at high SNR. 
He considered both memoryless phase noise and phase noise with memory. 
He showed that the capacity grows \emph{logarithmically} with the SNR with a pre-log factor 1/2
if the differential entropy rate of $\{\Theta_k\}$ is finite.
The pre-log 1/2 is achieved by amplitude modulation only, and
the phase modulation contributes only a bounded number of bits.
Dauwels and Loeliger \cite{Dauwels2008} proposed
a particle filtering method to compute information rates for discrete-time continuous-state channels with memory
and applied the method to (\ref{eq:dt-pn-ch}) for Wiener phase noise and auto-regressive-moving-average (ARMA) phase noise.
Barletta, Magarini and Spalvieri \cite{BarlettaLB}
computed lower bounds on information rates for (\ref{eq:dt-pn-ch}) with Wiener phase noise
by using the auxiliary channel technique proposed in \cite{Arnold2006}
and they computed upper bounds in \cite{BarlettaUB}.
They also developed a lower bound based on Kalman filtering in \cite{Barletta2012KalmanLB}.
Barbieri and Colavolpe \cite{Barbieri2011}
computed lower bounds with an auxiliary channel slightly different from \cite{BarlettaLB}.
Durisi et al. \cite{DurisiITA2013,DurisiCommTrans2014} developed capacity bounds for a multi-input multi-output (MIMO) extension of 
the discrete-time model (\ref{eq:dt-pn-ch}).

A less-studied model since 1990 is the continuous-time model for phase noise channels,
also called a \emph{waveform} phase noise channel, see Fig. \ref{fig:waveform_basic}.
The received waveform $r(t)$ is
\begin{align}
  r(t) = x(t) \ e^{j \theta(t)} + n(t),
  \text{ for } t \in \mathbb{R}
  \label{eq:waveform_ch_intro}
\end{align}
where $x(t)$ is the transmitted waveform
while $n(t)$ and $\theta(t)$ are additive noise and phase noise, respectively.
\begin{figure}
   \centering
	\resizebox{ \ifdefined\twocolumnmode 0.85\columnwidth \else 0.6\textwidth \fi }{!}{\input{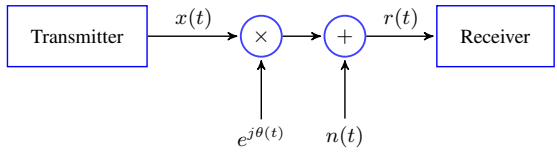}}
   \caption{Waveform phase noise channel.}
	 \label{fig:waveform_basic}
\end{figure}
Continuous-time \emph{white} phase noise was recently considered in \cite[Section IV.C]{Goebel2011} and \cite{LucaWhiteISIT2014, LucaWhiteCROWN2014}.
We study the waveform channel (\ref{eq:waveform_ch_intro}) with \emph{Wiener} phase noise.
A detailed description of the channel is given in Sec. \ref{sec:ct-model}.
This model is reasonable, for example, for optical fiber communication with low to intermediate power and laser phase noise,
see \cite{Foschini1988IT, Foschini1988Comm, FoschiniCom1989}.
Since the sampling of a continuous-time Wiener process yields a discrete-time Wiener process (Gaussian random walk),
it is tempting to use the model (\ref{eq:dt-pn-ch})
with $\{\Theta\}$ as a discrete-time Wiener process.
However, this ignores the effect of \emph{filtering} prior to sampling.
It was pointed out in \cite{Foschini1988IT} that 
``even coherent systems relying on amplitude modulation 
(phase noise is obviously a problem in systems employing phase modulation) 
will suffer some degradation due to the presence of phase noise''.
This is because the filtering converts phase fluctuations to amplitude variations.
It is worth mentioning that filtering is necessary before sampling to limit the variance of the noise samples.

The model (\ref{eq:dt-pn-ch}) thus does not fit\footnote{
The model would fit if a very narrow pulse was used as the pulse-shaping filter. 
Such pulses may be interesting for very large bandwidth applications.
} the channel (\ref{eq:waveform_ch_intro})
and it is not obvious whether a pre-log 1/2 is achievable.
The model that takes the effect of filtering into account is
\begin{align}
 Y_k = \Xsymb{k} H_k + N_k
 \label{eq:dt-noncoherent-fade}
\end{align}
where $\{H_k\}$ is a fading process.
The model (\ref{eq:dt-noncoherent-fade}) falls in the class of non-coherent fading channels,
i.e., the transmitter and receiver have knowledge of the distribution of the fading process $\{H_k\}$, 
but have no knowledge of its realization.
For such channels, Lapidoth and Moser showed in \cite{Lapidoth2003} that, 
at high SNR, the capacity grows \emph{double-logarithmically} with the SNR, 
when the process $\{H_k\}$ is stationary, ergodic and has finite differential entropy rate.

Rather than using a filter and sampling its output at the symbol rate,
we use a multi-sample receiver, i.e., a filter whose output is sampled many times per symbol.
We show that this receiver achieves a rate that grows \emph{logarithmically} with the SNR
if the number of samples per symbol grows with the cubic root of the SNR.
Furthermore, we show that a pre-log of 1/2 is achievable through amplitude modulation.
We also show for an approximate multi-sample model, where the approximation is to ignore the filtering effect over a sample, that phase modulation contributes 1/4 to the pre-log factor at high SNR
if the oversampling factor grows with the square root of SNR.
We corroborate the results of the high-SNR analysis with numerical simulations at finite SNR.
This paper collects and extends results presented in \cite{GhozlanISIT2013,GhozlanGLOBECOM2013,GhozlanISIT2014}.


%

Benefits of oversampling were reported for other problems.
For example, in some digital storage systems, it was shown by numerical simulations 
that oversampling can increase the information rate
\cite{PighiOversampling2007}.
In a low-pass-filter-and-limiter (nonlinear) channel, it was found that oversampling 
(with a factor of 2) offers a higher information rate
\cite{GilbertOversampling1993}.
It was demonstrated in \cite{KochOversamp2010} that doubling the sampling rate recovers some of the loss in capacity 
incurred on the bandlimited Gaussian channel with a one-bit output quantizer.
For non-coherent block fading channels, it was shown in \cite{DorpinghausOversamp2014} that
by doubling the sampling rate the information rate grows logarithmically at high SNR with a pre-log $1-1/N$ rather than
$1-Q/N$ which is the pre-log achieved by sampling at the symbol rate  
($N$ is the length of the fading block and $Q$ is the rank of the covariance matrix of the discrete-time channel gains within each block).


This paper is organized as follows.
The continuous-time model is described in Sec. \ref{sec:ct-model}
and the discrete-time models of the matched filter receiver and the multi-sample receiver are described in Sec. \ref{sec:dt-models}.
We give a lower bound on the capacity in Sec. \ref{sec:high-snr} based on amplitude modulation 
and show that the pre-log factor is at least 1/2 if the oversampling factor grows with the cubic root of the SNR.
We also give a lower bound on the rate achieved by phase modulation for the approximate multi-sample discrete-time model.
For such a model, we show that phase modulation contributes at least 1/4 to the pre-log factor, 
resulting in an overall pre-log factor of 3/4 when both amplitude and phase modulation are used.
This is achieved by using an oversampling factor that grows with the square root of the SNR.
We develop algorithms to compute tight lower bounds on the information rates
of a multi-sample receiver in Sec. \ref{sec:lower-bound}.
In Sec. \ref{sec:num-sim}, we report the results of numerical simulations 
which demonstrate the importance of increasing the oversampling factor with the SNR to achieve higher rates.
We discuss in Sec. \ref{sec:discuss} our results, related work, open problems and intuitions for our proofs.
Finally, we conclude with Sec. \ref{sec:conc}.

\section{Waveform Phase Noise Channel}
\label{sec:ct-model}
We use the following notation: 
$j=\sqrt{-1}$ , 
$^*$ denotes the complex conjugate, 
$\delta_D$ is the Dirac delta function, 
$\lceil \cdot \rceil$ is the ceiling operator.
We use $X^k$ as a shorthand for $(X_1,X_2,\ldots,X_k)$.
Suppose the transmit-waveform is $x(t)$ and 
the receiver observes 
\begin{align}
  r(t) = x(t) \ e^{j \theta(t)} + n(t)
  \label{eq:rx_waveform}
\end{align}
where $n(t)$ is a realization of a white circularly-symmetric complex Gaussian process $N(t)$ with
\begin{align}
&\mathbb{E}\left[ N(t) \right] = 0 \nonumber \\
&\mathbb{E}\left[ N(t_1) N^*(t_2) \right] = \sigma^2_N ~ \delta_D(t_2-t_1).
\label{eq:Nt_statistics}
\end{align}
The phase $\theta(t)$ is a realization of a Wiener process 
\begin{align}
  \Theta(t) = \Theta(0) + \int_0^t W(\tau) d\tau
\label{eq:Thetat}
\end{align}
where $\Theta(0)$ is uniform  on $[-\pi,\pi)$ and 
$W(t)$ is a real Gaussian process with
\begin{align}
&\mathbb{E}\left[ W(t) \right] = 0 \nonumber \\
&\mathbb{E}\left[ W(t_1) W(t_2)\right] = 2\pi \beta ~ \delta_D(t_2-t_1) 
\label{eq:Wt_statistics}
\end{align}
where $0 < \beta < \infty$.
The processes $N(t)$ and $\Theta(t)$ are independent of each other and independent of the input.
$N_0 = 2 \sigma^2_N$ is the single-sided power spectral density of the additive noise.
We define
$U(t) \equiv \exp(j \Theta(t))$.
The auto-correlation function of $U(t)$ is
\begin{align}
R_U(t_1,t_2) 
= \mathbb{E}\left[ U(t_1) U^*(t_2)\right] 
= \exp\left(- \pi \beta |t_2-t_1| \right)
\end{align}
and the power spectral density of $U(t)$ is
\begin{align}
S_U(f)
= \int_{-\infty }^{\infty} R_U(t,t+\tau) \ e^{-j 2\pi f \tau} d\tau
= \frac{1}{\pi} \frac{\beta/2}{(\beta/2)^2+ f^2}.
\end{align}
The spectrum is said to have a Lorentzian shape.
We have
$\beta = f_{\text{FWHM}} = 2 f_{\text{HWHM}}$
where 
$f_{\text{FWHM}}$ is the full-width at half-maximum
and
$f_{\text{HWHM}}$ is the half-width at half-maximum.
One also refers to $\beta$ as the \emph{linewidth} of the phase noise.

Let $T$ be the transmission interval, then
the transmitted waveforms must satisfy the power constraint
\begin{align}
	\mathbb{E}\left[\frac{1}{T} \int_0^{T} |X(t)|^2 dt \right] \leq \mathcal{P}
	\label{eq:finitesupport_waveform_power_constraint}
\end{align}
where $X(t)$ is a random process whose realization is $x(t)$.

\section{Discrete-time Models}
\label{sec:dt-models}
\subsection{Transmitter}
Let $(\xsymb{1},\xsymb{2},\ldots,\xsymb{\nsymb})$
be the codeword sent by the transmitter.
Suppose the transmitter uses a unit-energy pulse $g(t)$ whose time support is $[0,\Ts]$
where $\Ts$ is the symbol interval, which is finite and fixed.
The waveform sent by the transmitter is
\begin{align}
 x(t) = \sum_{m=1}^{\nsymb}  \xsymb{m} \ g(t-(m-1) \Ts).
\label{eq:xt_modulated_rect}
\end{align}
We remark that using pulses that are limited in time to one symbol interval simplifies the analysis
because it eliminates intersymbol interference at the receiver. 
We focus on rectangular pulses for the high SNR analysis in Sec. \ref{sec:high-snr}.
A disadvantage of a rectangular pulses is the slow decay of the sidelobes in frequency domain,
which motivates us to also consider cosine-squared pulses for numerical rate computation in Sec. \ref{sec:num-sim}.
The sidelobes of a cosine-squared pulses, 
which is equivalent to a (time-domain) raised-cosine pulse with roll-off factor 1,
decay much faster than those of a rectangular pulse.
We also remark that Martal\`{o} et al\cite{MartaloBW2015} extended some of our numerical results for bandlimited pulses.
However, their model does not include fully the filtering effects of a continuous-time model.

\subsection{Matched Filter Receiver}
\label{sec:dt-model-mf}
The received signal $r(t)$ is fed to a matched filter and 
the output $y(t)$ of the filter is sampled at the symbol rate as illustrated in Fig. \ref{fig:rx_matched_filter}.
\begin{figure}
   \centering
	 \resizebox{\ifdefined\twocolumnmode 1\columnwidth \else 0.6\textwidth \fi}{!}{\input{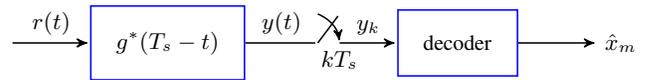}}
	\caption{Matched filter receiver with symbol rate sampling.}
	\label{fig:rx_matched_filter}
\end{figure}
The $k$-th output sample is
\begin{align}
	y_k 
	&= r(t) \star g^*(\Ts-t) |_{t=k \Ts} \nonumber \\
	&= \xsymb{k} \ \int_{- \infty}^{\infty} e^{j \theta(\tau)} ~ |g(\tau-(k-1) \Ts)|^2 d\tau + n_k
\end{align}
where $k=1,\ldots,\nsymb$, $\star$ denotes convolution
and $n_k$ is the part of the output due to the additive noise.
Therefore, we have the discrete-time model
\begin{align}
 Y_k = \Xsymb{k} H_k + N_k
 \label{eq:dt-mfrx-full}
\end{align}
where 
\begin{align}
H_k \equiv \int_{- \infty}^{\infty} e^{j \Theta(\tau)} ~ |g(\tau-(k-1) \Ts)|^2 d\tau.
\label{eq:Hk}
\end{align}
The transmitter and receiver know the distribution of the fading process $\{H_k\}$
but have no knowledge of its realization.
We strongly suspect that $\{H_k\}$ in (\ref{eq:Hk}) is stationary, ergodic and has a finite differential entropy rate, i.e., $h(\{H_k\}) > -\infty$.
Lapidoth and Moser showed in \cite{Lapidoth2003} that, at high SNR, 
the capacity grows \emph{double-logarithmically} with the SNR under the aforementioned conditions on $\{H_k\}$.

A commonly-used approximation for $H_k$ is
\ifdefined\twocolumnmode{
\begin{align}
	&\int_{- \infty}^{\infty} e^{j \theta(\tau)} ~ |g(\tau-(k-1) \Ts)|^2 d\tau \nonumber\\
	&\approx 
	e^{j \theta(k\Ts)} \int_{- \infty}^{\infty} |g(\tau-(k-1) \Ts)|^2 d\tau
\end{align}
}\else{
\begin{align}
	\int_{- \infty}^{\infty} e^{j \theta(\tau)} ~ |g(\tau-(k-1) \Ts)|^2 d\tau
	\approx 
	e^{j \theta(k\Ts)} \int_{- \infty}^{\infty} |g(\tau-(k-1) \Ts)|^2 d\tau
\end{align}
}\fi
which yields the approximate discrete-time model
\begin{align}
	Y_k = \Xsymb{k} ~ e^{j \Theta(k \Ts)} + N_k.
	\label{eq:dt-mfrx-approx}
\end{align}
We refer to (\ref{eq:dt-mfrx-full}) and (\ref{eq:dt-mfrx-approx}) as the symbol-rate model with filtering and without filtering, respectively.
We also refer to (\ref{eq:dt-mfrx-approx}) as the approximate symbol-rate model.
We do not claim that this approximation is always accurate.
We introduce the approximate model 
to examine its validity by comparing it with 
the discrete-time model based on the multi-sample receiver.
We remark that the differential entropy of $\{\Theta(k \Ts)\}$ is finite 
because the linewidth $\beta$ of $\Theta(t)$ and the symbol interval $\Ts$ are finite
and therefore, the pre-log of the model (\ref{eq:dt-mfrx-approx}) is 1/2.

\subsection{Multi-sample Receiver}
\label{sec:dt-model-multisample}
Consider next a multi-sample receiver (see Fig. \ref{fig:rx_oversamp}).
Let $L$ be the number of samples per symbol ($L \geq 1$)
and define the sample interval as
\begin{align}
\Delta = \frac{\Ts}{L}.
\label{eq:Delta_def}
\end{align}
The received waveform $r(t)$ is filtered using an integrator over a sample interval to give the output signal
\begin{align}
y(t) 
&= \int_{t - \Delta}^{t} r(\tau) \ d\tau.
\label{eq:yt}
\end{align}
The signal $y(t)$ is a realization of a random process $Y(t)$ that
is sampled at $t = k \Delta$,
$k = 1, \ldots, \nsamp = \nsymb L$.
\begin{figure}
   \centering
	 \resizebox{\ifdefined\twocolumnmode 1\columnwidth \else 0.6\textwidth \fi}{!}{\input{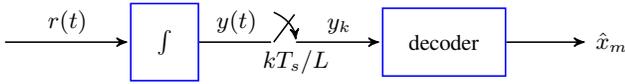}}
	 \caption{Multi-sample receiver.}
	 \label{fig:rx_oversamp}
\end{figure}
Hence, we have the discrete-time model in which the $k$-th output sample is
\begin{align}
Y_k = \Xsymb{\lceil k/L \rceil} \Delta \ e^{j \Theta_k} \ F_k + N_k
\label{eq:Yk}
\end{align}
where
$Y_k \equiv Y(k \Delta)$, 
$\Theta_k \equiv \Theta( (k-1) \Delta )$,
\begin{align}
F_k \equiv \frac{1}{\Delta} \int_{(k-1) \Delta}^{k \Delta} 
g\left(\tau-\left(\left\lceil\frac{k}{L}\right\rceil -1\right) \Ts\right)
e^{j(\Theta(\tau)-\Theta_k)} \ d\tau
\label{eq:Fk_def_tlim}
\end{align}
and
\begin{align}
N_k 
&\equiv \int_{(k-1) \Delta}^{k \Delta} N(\tau) \ d\tau.
\label{eq:Nk_def}
\end{align}
The process $\{N_k\}$ is an i.i.d. circularly-symmetric complex Gaussian process with 
\begin{align}
\mathbb{E}[ N_k] &= 0
\label{eq:N_mean}
\\
\mathbb{E}[ N_k N_\ell^* ] &= \sigma^2_N \Delta \ \delta_K[\ell-k]
\label{eq:N_covar}
\\
\mathbb{E}[ N_k N_\ell ] &= 0
\label{eq:N_pseudocovar}
\end{align}
where $\delta_K[\cdot]$ is the Kronecker delta.
The process $\{\Theta_k\}$ is the discrete-time Wiener process
\begin{align}
  \Theta_k = \Theta_{k-1} + W_k 
\end{align}
for $k = 2, \ldots, n$,
where
$\Theta_1$ is uniform on $[-\pi,\pi)$ and
$\{W_k\}$ is an i.i.d. real Gaussian process with mean $0$ and $\mathbb{E}[ |W_k|^2 ] = 2\pi \beta \Delta$, i.e., 
the probability density function (pdf) of $W_k$ is $p_{W_k}(w) = G(w;0,\sigma^2_W)$ where
\begin{align}
 G(w;\mu,\sigma^2) = \frac{1}{\sqrt{2\pi \sigma^2}} \exp\left( - \frac{(w-\mu)^2}{2 \sigma^2} \right)
\end{align}
and $\sigma^2_W = 2 \pi \beta \Delta$.
The random variable $(W_k \mod 2\pi)$  is a \emph{wrapped Gaussian} and its pdf is
\begin{align}
 p_{W}(w;\sigma^2) = \sum_{i=-\infty}^{\infty} G(w-2 i \pi;0,\sigma^2).
\end{align}
Moreover, $\{F_k\}$ and $\{W_k\}$ are independent of $\{N_k\}$ but not independent of each other.
Equations (\ref{eq:finitesupport_waveform_power_constraint}) and (\ref{eq:xt_modulated_rect})
imply the power constraint 
\begin{align}
	\frac{1}{\nsymb} \sum_{m=1}^{\nsymb} \mathbb{E}[|\Xsymb{m}|^2] \leq P \equiv \mathcal{P} \Ts.
	\label{eq:dt_power_constraint}
\end{align}
The signal-to-noise ratio $\SNR$ is defined as
\begin{align}
	\SNR \equiv \frac{P}{\sigma^2_N \Ts} = \frac{\mathcal{P}}{\sigma^2_N}.
\end{align}


\subsection{Information Rates}
\label{sec:info-rates-def}
We use $\Y^k$ as a shorthand for $(\Y_1,\Y_2,\ldots,\Y_k)$ where
\begin{align} \Y_k \equiv (Y_{(k-1) L + 1},Y_{(k-1) L + 2},\ldots,Y_{(k-1) L + L}). \label{eq:Yveck_def} \end{align}
The capacity of a point-to-point channel is given by
\begin{align}
C(\SNR) = \lim_{\nsymb \rightarrow \infty} \frac{1}{\nsymb} \sup I(X^\nsymb;\Y^\nsymb)
\label{eq:capacity-def}
\end{align}
where the supremum is over all joint distributions 
of the input symbols satisfying the power constraint.
For a given input distribution, the achievable rate $R$ is
\begin{align}
R(\SNR) = I(X;Y) \equiv \lim_{\nsymb \rightarrow \infty} \frac{1}{\nsymb} I(X^\nsymb;\Y^\nsymb).
\label{eq:I_X_Y-def}
\end{align}

\section{High-SNR Information Rates}
\label{sec:high-snr}

In this section, we study rectangular pulses, i.e., we consider
\begin{align}
 g(t) \equiv \left\{ 
  \begin{array}{ll}
  \sqrt{1/\Ts},	& 0 \leq t <\Ts, \\
  0,			& \text{otherwise}.
  \end{array}
 \right.
\label{eq:rect_pulse_def}
\end{align}
It follows that $F_k$ in (\ref{eq:Fk_def_tlim}) becomes
\begin{align}
F_k \equiv \frac{1}{\Delta} \int_{(k-1) \Delta}^{k \Delta} e^{j(\Theta(\tau)-\Theta_k)} \ d\tau.
\label{eq:Fk_def}
\end{align}

We define $X_A \equiv |X|$ and $\Phi_X \equiv \arg(X)$ 
where $\arg(\cdot)$ denotes the phase angle (also called the argument) of a complex number.
We decompose the mutual information using the chain rule into two parts:
\begin{align}
I(X^\nsymb;\Y^\nsymb) 
&= I(X_{A}^\nsymb;\Y^\nsymb) + I(\Phi_X^\nsymb;\Y^\nsymb|X_{A}^\nsymb)
\label{eq:I_X1n_Y1n}
\end{align}
where $X_A^\nsymb = (|X_1|,\ldots,|X_\nsymb|)$ and $\Phi_X^\nsymb = (\arg{X_1},\ldots,\arg{X_\nsymb})$.
Define
\begin{align}
I(X_{A};Y) 
&\equiv \lim_{\nsymb \rightarrow \infty} \frac{1}{\nsymb} I(X_{A}^\nsymb;\Y^\nsymb)
\label{eq:I_XA_Y-def}
\end{align}
and
\begin{align}
I(\Phi_{X};Y|X_{A})
&\equiv \lim_{\nsymb \rightarrow \infty} \frac{1}{\nsymb} I(\Phi_{X}^\nsymb;\Y^\nsymb|X_{A}^\nsymb) .
\label{eq:I_PhiX_Y|XA-def}
\end{align}
It follows from (\ref{eq:I_X_Y-def})--(\ref{eq:I_PhiX_Y|XA-def}) that
\begin{align}
I(X;Y) = I(X_A;Y) + I(\Phi_{X};Y|X_{A}) .
\label{eq:I_X_Y}
\end{align}
The first term represents the contribution of the amplitude modulation while 
the second term represents the contribution of the phase modulation.
The results of this section are summarized in 
Theorem \ref{theorem:oversamp-fullmodel} and Theorem \ref{theorem:oversamp-approxmodel}
which address the contribution of amplitude modulation and phase modulation to the information rate, respectively.

Before we proceed to the theorems, we introduce an important result that we use multiple times in the paper.
Let $X$ and $Y$ be the input and output of a channel, respectively and 
let $q(\cdot|\cdot)$ be a valid conditional pdf, 
which represents an ``auxiliary channel''.
The following auxiliary-channel lower bound holds:
\begin{align}
I(X;Y)
&\geq \underline{I}(X;Y) 
= \mathbb{E}[\log{q_{Y|X}(Y|X)-\log{q_Y(Y)}}]
\label{eq:I_AuxCh}
\end{align}
where $q_Y(\cdot)$ is the output pdf obtained by connecting the true input source to the auxiliary channel.
Note that $\mathbb{E}[\cdot]$ is the expectation according to the \emph{true} distribution of $X$ and $Y$.
The lower bound is the rate achieved by a mismatched decoder.
See \cite[pp. 6-7]{ScarlettPHD2015} for a recent comprehensive review on mismatched decoding.

\begin{theorem}[Amplitude Modulation]
\label{theorem:oversamp-fullmodel}
Consider the discrete-time model in (\ref{eq:Yk}) with $F_k$ as defined in (\ref{eq:Fk_def}).
If $L = \lceil \sqrt[3]{(\beta\Ts)^2 \SNR} \rceil$ and
the process $\{|X_k|\}$ is i.i.d. such that 
$(|X_k|^2-P/2)$ is exponentially distributed with mean $P/2$,
then
\begin{align}
\lim_{\SNR \rightarrow \infty} I(X_A;Y) - \frac{1}{2} \log{\SNR}
&\geq - 2  - \frac{1}{2} \log(8 \pi)
- \frac{\pi^2}{45}.
\label{eq:I_XA_Y_fullmodel_theorem}
\end{align}
\end{theorem}
\begin{proof} See Appendix \ref{sec:amplitude-lower-bound}. 
The lower bound in (\ref{eq:I_XA_Y_fullmodel_theorem}) is achieved by 
the double-filtering receiver structure shown in Fig. \ref{eq:rx_double_filtering}.
We borrow the term ``double-filtering'' from \cite{Foschini1988IT}. 
\end{proof}

\begin{figure*}
   \centering
	 \resizebox{\ifdefined\twocolumnmode 0.8\textwidth \else \textwidth \fi}{!}{\input{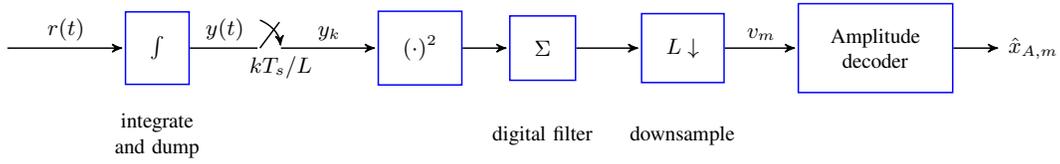}}
	 \caption{Double filtering receiver.}
	 \label{eq:rx_double_filtering}
\end{figure*}

Theorem \ref{theorem:oversamp-fullmodel} implies that the capacity pre-log satisfies
\begin{align}
\lim_{\SNR \rightarrow \infty} \frac{C(\SNR)}{\log{\SNR}} \geq \frac{1}{2}
\end{align}
which follows from (\ref{eq:I_XA_Y_fullmodel_theorem}) and $C(\SNR) \geq I(X;Y) \geq I(X_A;Y)$.
The capacity thus grows logarithmically at high SNR with a pre-log factor of at least 1/2
if the oversampling factor $L$ grows with the cubic root of SNR.

Consider next a simplified model 
that does not include the effect of filtering modeled by $\{F_k\}$.
More specifically, the $k$th output of the simplified approximate model is
\begin{align}
\Psi_k = X_{\lceil k/L \rceil} \Delta \ e^{j \Theta_k} + N_k
\label{eq:Yk_approx}
\end{align}
where $X_k$ is $k$th input \emph{symbol} and $\{\Theta_k\}$ and $\{N_k\}$ are the same processes defined in Sec. \ref{sec:dt-model-multisample}.
\begin{theorem}[Phase Modulation]
\label{theorem:oversamp-approxmodel}
Consider the discrete-time model in (\ref{eq:Yk_approx}).
If $L = \lceil \sqrt{\pi \beta\Ts \SNR}/2 \rceil$ and
the input $X^\nsymb$ is i.i.d. with $\arg(X_k)$ independent of $|X_k|$ for $k=1,\ldots,\nsymb$ such that 
$\arg(X_k)$ is uniformly distributed over $[-\pi,\pi)$ and
$(|X_k|^2-P/2)$ is exponentially distributed with mean $P/2$,
then
\begin{align}
\lim_{\SNR \rightarrow \infty} I(X_{A};\Psi) - \frac{1}{2} \log{\SNR}
\geq - 2  - \frac{1}{2} \log(8 \pi)
\label{eq:I_XA_Y_approxmodel_theorem}
\end{align}
and
\begin{align}
\lim_{\SNR \rightarrow \infty} I(\Phi_{X};\Psi|X_{A}) - \frac{1}{4} \log{\SNR}
\geq - \frac{1}{4} \log(64 \pi \beta\Ts) - \frac{1}{2}.
\label{eq:I_PhiX_Y|XA_approxmodel_theorem}
\end{align}

\end{theorem}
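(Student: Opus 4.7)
The plan is to prove the two claims separately using the additive decomposition $I(X;\Y) = I(X_A;\Y) + I(\Phi_X;\Y|X_A)$ from (\ref{eq:I_X_Y}), treating the amplitude and phase contributions independently. The amplitude bound reuses the machinery developed in Section \ref{sec:amplitude-lower-bound} in a simplified form, and the phase bound follows a separate auxiliary-channel argument that combines a phase-estimation step with the AWGN phase-modulation lemma.

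For the amplitude bound (\ref{eq:I_XA_Y_approxmodel_theorem}), I would adapt the argument of Section \ref{sec:amplitude-lower-bound} to the simpler model (\ref{eq:Yk_approx}). The key simplification is that the filtering factor is absent, so $F_k = 1$ identically and hence $G = 1$ in (\ref{eq:G_def}). The auxiliary channel (\ref{eq:V_auxiliary}) with $\tilde{G} = 1$ is then exact in its multiplicative component, eliminating the term $\mathbb{E}[(G-1)^2]$ from the bound (\ref{eq:I_XA_Y_LB}). Using the shifted-exponential input (\ref{eq:shifted_exp_distrib}) with $L = \lceil \beta \sqrt{\SNR} \rceil$ makes $\SNR\Delta \to \infty$, so the $1/(2\SNR\Delta)$ term also vanishes, leaving the claimed bound without the $\pi^2/36$ correction present in Theorem \ref{theorem:oversamp-fullmodel}.

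For the phase bound (\ref{eq:I_PhiX_Y|XA_approxmodel_theorem}), I would follow the program of Section \ref{sec:approxmodel-phase}. The chain-rule expansion (\ref{eq:I_PhiX1n_Y1n|XA1n_LB1}) reduces the problem to lower-bounding $I(\Phi_{X,k}; \tilde{Y}_k | X_{A,k}, X_{k-1})$, where $\tilde{Y}_k$ is the data-processed statistic defined in (\ref{eq:Yk_tilde_def}). This statistic combines a coarse phase estimate $\widehat{\Theta}_{(k-1)L}$ formed from the previous symbol's last sample with the current symbol's first sample, which has suffered the least phase drift. The factorization (\ref{eq:Yk_tilde}) then exposes two conditionally independent AWGN-like phase channels, and Lemma \ref{lemma:awgn-lemma-trig} bounds the expected cosines of their phase residuals as in (\ref{eq:Ecos_PhiYhat_minus_PhiX}) and (\ref{eq:Ecos_PhiS}). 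Plugging these into the Tikhonov auxiliary-channel bound (\ref{eq:I_PhiX_Ytilde_LB_LEBOWSKI}) with $\alpha = \SNR\Delta$ and using the limits (\ref{eq:snr-limits}) delivers the pre-log of $1/4$.

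The main technical obstacle is the choice of the statistic $\tilde{Y}_k$: it must be rich enough to recover the $\sqrt{\SNR}$-scaling needed for the Tikhonov concentration parameter $\alpha = \SNR\Delta$, yet simple enough that Lemma \ref{lemma:awgn-lemma-trig} applies cleanly to each of the two factors in (\ref{eq:Yk_tilde}). The balance $L \sim \sqrt{\SNR}$ is precisely what equates the per-sample signal-to-noise ratio with $\sqrt{\SNR}$ while keeping the one-step phase increment variance $\sigma_W^2 = 2\pi\beta\Delta$ small enough that $e^{-\sigma_W^2/2} \to 1$ at the right rate; this balance explains why phase modulation yields exactly half the pre-log contribution of amplitude modulation in this setting, since the phase estimate itself already consumes one factor of $\sqrt{\SNR}$.
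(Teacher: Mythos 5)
Your proposal follows exactly the paper's own route: the amplitude bound is obtained by rerunning the Section \ref{sec:amplitude-lower-bound} argument with $F_k=1$ (hence $G=1$), which is precisely how the paper disposes of the $\pi^2/36$ term, and the phase bound is the chain-rule/data-processing reduction to $\tilde{Y}_k$ in (\ref{eq:Yk_tilde_def}), Lemma \ref{lemma:awgn-lemma-trig} applied to the two factors of (\ref{eq:Yk_tilde}), and the Tikhonov auxiliary channel with $\alpha=\SNR\Delta$. This is essentially identical to the proof given in Section \ref{sec:ph-mod-approx}, so no further comparison is needed.
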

\begin{proof} See Appendix \ref{sec:approxmodel-phase}. 
The lower bounds (\ref{eq:I_XA_Y_approxmodel_theorem}) and (\ref{eq:I_PhiX_Y|XA_approxmodel_theorem})
are achieved by the receiver structure shown in Fig. \ref{fig:two-stage}.
\end{proof}
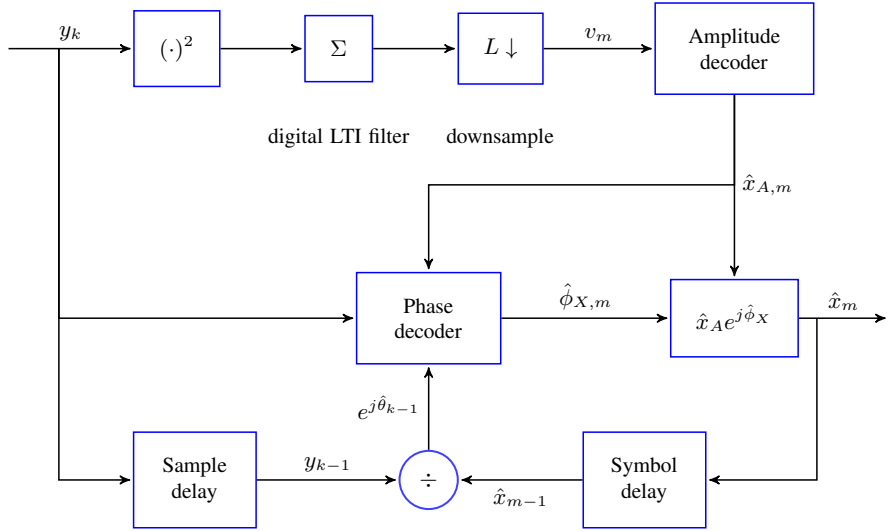
\begin{figure*}
   \centering
	 \resizebox{\ifdefined\twocolumnmode 0.75\textwidth \else 0.75\textwidth \fi}{!}{{\scalefont{0.85}
\begin{tikzpicture}[->,>=stealth', shorten >=1pt, auto, node distance=2.25cm, semithick]
  \tikzstyle{circ}=[circle,thick,draw=blue!75,minimum size=3pt]
  \tikzstyle{rect} = [draw=blue, rectangle, inner sep=10pt, inner ysep=10pt]
  
  \node (DUMMY) [] {};
	\node[rect] (SQR) [right  of=DUMMY,xshift=1.5cm] {$(\cdot)^2$};
  \draw [->] (DUMMY.east) +(1cm,0) +(1.3cm,0cm) -- node [anchor=south] {$y_k$} (SQR);
  \node[rect] (ACC) [right  of=SQR,xshift=-0cm] {$\Sigma$};
  \node (ACCTXT) [below of=ACC,yshift=+1cm,text width=2cm,align=center] {digital LTI filter};
	\node[rect] (DS) [right  of=ACC,xshift=-0cm] {$L \downarrow$};
  \node (DSTXT) [below of=DS,yshift=+1cm,text width=2cm,align=center] {downsample};
  \node[rect] (DEC) [right  of=DS,xshift=1cm,text width=1.5cm,align=center] {Amplitude decoder};
  \node[rect] (DEC_PHASE) [below  of=ACC,yshift=-1.5cm,xshift=1.25cm,text width=1.3cm,align=center] {Phase decoder};	
  \node[circ] (THETA_HAT) [below of=DEC_PHASE,text width=0.5cm,align=center] {$\div$};	
  \node[rect] (DELAY_Y) [left of=THETA_HAT,xshift=-1.00cm,text width=1cm,align=center] {Sample delay};
  
	\draw [->] (DUMMY.east) +(1cm,0) +(2cm,0cm) |- (DELAY_Y);
  \node[rect] (DELAY_XHAT) [right of=THETA_HAT,xshift=0.75cm,text width=1cm,align=center] {Symbol delay};
  \draw [->] (DUMMY.east) +(1cm,0) +(2cm,0cm) |- (DEC_PHASE);
  \node[rect] (X_HAT) [below of=DEC,yshift=-1.5cm] {$\hat{x}_{A} e^{j \hat{\phi}_X}$};
  \node (DUMMYSINK) [right of=X_HAT] {};
  \draw [->] (X_HAT.east)+(0.25cm,0) |- (DELAY_XHAT.east);
  \draw [->] (DEC.south) |- +(0cm,-1.25cm) node [anchor=west] {$\hat{x}_{A,m}$} -| (DEC_PHASE.north);

  \path 
  (SQR) edge node {} (ACC)
	(ACC) edge node {} (DS)
  (DS) edge node {$v_m$} (DEC)

	(DELAY_Y) edge node {$y_{k-1}$} (THETA_HAT)
 	(DELAY_XHAT) edge node {$\hat{x}_{m-1}$} (THETA_HAT)
	(THETA_HAT) edge node {$e^{j \hat{\theta}_{k-1}}$} (DEC_PHASE)
	(DEC_PHASE) edge node {$\hat{\phi}_{X,m}$} (X_HAT)
	(DEC) edge node {} (X_HAT)
	(X_HAT) edge node {$\hat{x}_m$} (DUMMYSINK)
	
;

\draw [help lines];
\end{tikzpicture}
}}
	 \caption{Two-stage decoding in a multi-sample receiver. 
	 The upper branch is part of  the double filtering receiver for amplitude decoding.
	 The lower branch is a phase-noise tracker which outputs an estimate 
	 $\hat{\theta}_{k-1}$ of the phase noise where $\hat{\theta}_{k-1} = \arg(y_{k-1}/\hat{x}_{m-1})$ where $m = \lceil k/L \rceil$ (cf. (\ref{eq:Theta_hat})).
	 The phase decoder uses the estimate of the phase noise $\hat{\theta}_{k-1}$ and the estimated amplitude $\hat{x}_{A,m}$
	 to produce an estimate $\hat{\phi}_{X,m}$ of the phase of $x_m$.
	 }
	 \label{fig:two-stage}
\end{figure*}

Theorem \ref{theorem:oversamp-approxmodel} implies that
the capacity pre-log for the approximate model (\ref{eq:Yk_approx}) satisfies
\begin{align}
\lim_{\SNR \rightarrow \infty} \frac{C(\SNR)}{\log{\SNR}} \geq \frac{3}{4}
\label{eq:prelog_approxmodel}
\end{align}
which follows from (\ref{eq:I_XA_Y_approxmodel_theorem})-(\ref{eq:I_PhiX_Y|XA_approxmodel_theorem}) by using $C(\SNR) \geq I(X;\Psi) = I(X_{A};\Psi) + I(\Phi_{X};\Psi|X_{A})$.
This shows that the capacity grows logarithmically at high SNR with a pre-log factor of at least 3/4.
We remark that Barletta and Kramer \cite{BarlettaISIT2015} extended the result of Theorem \ref{theorem:oversamp-approxmodel}
to the full model (\ref{eq:Yk}) that includes the filtering effect, 
i.e., they showed that $C(\SNR) \geq 3/4$ for (\ref{eq:Yk}) if $L = \lceil \sqrt{\SNR} \rceil$.
More generally, for $L=\lceil \SNR^\alpha \rceil$, they showed for the full model that
\begin{align}\label{eq:prelog_vs_L}
 \lim_{\SNR\rightarrow\infty} \frac{C(\SNR)}{\log(\SNR)} \geq 
 \left\{
 \begin{array}{ll}
  2\alpha, 			 & 0<\alpha\le 1/3 \\
  (1+\alpha)/2,	 & 1/3\le\alpha\le 1/2 \\
  3/4,					 & 1/2 \le \alpha < 1 .
 \end{array}\right.
\end{align}



\section{Rate Computation at Finite SNR}
\label{sec:lower-bound}
We develop algorithms to compute the information rate numerically.
We use $\Xsamp_k$ to denote the $k$th input \emph{sample} which is defined by
\begin{align}
\Xsamp_k 
= X_{\lceil k/L \rceil} ~ g\left( (k \text{ mod } L) \Delta\right).
\label{eq:Xk_def}
\end{align}
The information rate $I(X;Y)$ in (\ref{eq:I_X_Y-def}) can be expressed as
\begin{align}
  I(X;Y) 
  = \lim_{\nsymb \rightarrow \infty} \frac{1}{\nsymb} I(\Xsamp^n;Y^n)
\label{eq:IXY}
\end{align}
which follows from (\ref{eq:Xk_def}) and (\ref{eq:Yveck_def}).
One difficulty in evaluating (\ref{eq:IXY}) is that the joint distribution of $\{F_k\}$ and $\{W_k\}$ is not available in closed form.
Even the distribution of $F_k$ is not available in closed form
(there is an approximation for small linewidth, see (16) in \cite{Foschini1988Comm}).
However, we can numerically compute tight lower bounds on $I(X;Y)$ by using the auxiliary-channel technique (cf. (\ref{eq:I_AuxCh})).
We use the following algorithm \cite{Arnold2006}.
\begin{enumerate}
\item Sample a long sequence $(x^n, y^n)$ according to the \emph{true} joint distribution of $\Xsamp^n$ and $Y^n$; 
\item Compute $q_{Y^n|\Xsamp^n}(y^n|x^n)$
and
\begin{align}
 q_{Y^n}(y^n) = \sum_{\tilde{x}^n} p_{\Xsamp^n}(\tilde{x}^n) q_{Y^n|\Xsamp^n}(y^n|\tilde{x}^n)
\end{align}
where $p_{\Xsamp^n}$ is the true distribution of $\Xsamp^n$;
\item Estimate $\underline{I}(X;Y)$ using
\begin{align}
  \underline{I}(X;Y) \approx 
  \frac{1}{\nsymb}\log\left( \frac{q_{Y^n|\Xsamp^n}(y^n|x^n)}{q_{Y^n}(y^n)} \right).
\end{align}
\end{enumerate}

\paragraph*{Auxiliary Channel I}
Consider the auxiliary channel
\begin{align}
\Psi_k 
= \Xsamp_k \Delta \ e^{j \Theta_k} + N_k
\label{eq:Psik_def}
\end{align}
where $\{\Theta_k\}$ and $\{N_k\}$ are defined in Sec. \ref{sec:dt-model-multisample}
and
$\Xsamp_k$ is defined by (\ref{eq:Xk_def}).
The channel output $\Psi$ is the same as $Y$ in (\ref{eq:Yk}) \emph{except} that $F_k$ is replaced by 1 or more generally with $g\left( (k \mod L) \Delta\right)$.
The channel is described by
the conditional distribution
\begin{align}
  p_{\Psi^n|\Xsamp^n}(y^n|x^n) 
  &=\int_{\theta^n} p_{\Theta^n,\Psi^n|\Xsamp^n}(\theta^n,y^n|x^n) \ d\theta^n
\end{align}
where
\begin{align}
  &p_{\Theta^n,\Psi^n|\Xsamp^n}(\theta^n,y^n|x^n) \nonumber \\ \qquad
  &= \prod_{k=1}^{n} p_{\Theta_k|\Theta_{k-1}}(\theta_k|\theta_{k-1}) \ 
     p_{\Psi|\Xsamp,\Theta}(y_k|x_k,\theta_k)
  \label{eq:p_Theta_Psi|X}
\end{align}
with
\begin{align}
p_{\Theta_k|\Theta_{k-1}}(\theta|\tilde{\theta}) = 
\left\{
\begin{array}{ll}
p_{W}(\theta-\tilde{\theta}; \sigma^2_W), 	& k \geq 2 \\
1/(2\pi), 					& k = 1 
\end{array}
\right.
\end{align}
and
\begin{align}
p_{\Psi|\Xsamp,\Theta}(y|x,\theta)
= \frac{1}{\pi \sigma^2_N \Delta} \exp\left(- \frac{\left|y - x \Delta ~ e^{j \theta} \right|^2}{\sigma^2_N \Delta} \right).
\label{eq:p_Y|XTheta}
\end{align}
The channel $p_{\Psi^n|\Xsamp^n}$ has continuous states $\theta^n$,
which implies that step 2 of the algorithm can not be computed directly.
To compute the integrals, we quantize the states (the phase noise).
Finer quantization results in more accurate results 
but is also more computationally intensive.
We approach this issue by first computing an estimate of the rate at a small number of quantization levels,
then we increase the number of quantization levels as long as there is a significant difference between the estimated rates. 
When there is no signification change in the rate estimates, 
we take that to be a (lower bound on the) rate with the auxiliary channel $\Psi_k$.
This approach is demonstrated in Fig. \ref{fig:16qam_fhwhm125_sqrtx_optrx_osr} and Fig. \ref{fig:16qam_fhwhm125_cos2tx_optrx_osr}, both in Sec. \ref{sec:num-sim}.
However, in Fig. \ref{fig:16psk_fhwhm12_sqrtx_optrx_osr} and Fig. \ref{fig:raheli_comparison}, 
we show only the final result after finding the appropriate number of quantization levels.
Next, we describe formally the quantization of the states and the rate computation in that case.

\paragraph*{Auxiliary Channel II}
We use the following auxiliary channel for the numerical simulations:
\begin{align}
\Upsilon_k 
= \Xsamp_k \Delta \ e^{j S_k} + N_k.
\label{eq:Upsilonk_def}
\end{align}
The conditional probability is
\begin{align}
  p_{\Upsilon^n|\Xsamp^n}(y^n|x^n) 
  &=\sum_{s^n \in \mathcal{S}^n} p_{S^n,\Upsilon^n|\Xsamp^n}(s^n,y^n|x^n)
\end{align}
where $\mathcal{S}$ is a \emph{finite} set and
\begin{align}
  &p_{S^n,\Upsilon^n|\Xsamp^n}(s^n,y^n|x^n) \nonumber \\ \qquad
  &= \prod_{k=1}^{n} p_{S_k|S_{k-1}}(s_k|s_{k-1}) \ 
     p_{\Psi|\Xsamp,\Theta}(y_k|x_k,s_k)
  \label{eq:p_S_Upsilon|X}
\end{align}
where
\begin{align}
p_{S_k|S_{k-1}}(s|\tilde{s}) = 
\left\{
\begin{array}{ll}
Q(s|\tilde{s}), 			& k \geq 2 \\
1/|\mathcal{S}|,			& k = 1. 
\end{array}
\right.
\end{align}

Next, we describe our choice of $\mathcal{S}$ and $Q(\cdot|\cdot)$.
We  partition $[-\pi,\pi)$ into $S$ intervals with equal lengths
and pick the mid points of these intervals to be the elements of $\mathcal{S}$,
i.e., we have
\begin{align}
\mathcal{S} = \left\{ \hat{s}_i : i = 1,\ldots,S \right\}
\text{ where }
\hat{s}_i = i \frac{2 \pi}{S} - \frac{\pi}{S} - \pi
.
\end{align}
The state transition probability $Q(\cdot|\cdot)$ is chosen similar to \cite{BarlettaLB} and \cite{BarlettaUB}:
\begin{align}
Q(s|\tilde{s}) = 
\frac{2\pi}{S}
{\int_{(\phi,\tilde{\phi}) \in \mathcal{R}(s) \times \mathcal{R}(\tilde{s})} 
p_{W}(\phi-\tilde{\phi};\sigma^2_W) \ d\phi d\tilde{\phi}}
\end{align}
where $\mathcal{R}(s) = [ s- {\pi}/{S} , s + {\pi}/{S} )$, i.e.,
$\mathcal{R}(s)$ is the interval whose midpoint is $s$.
The larger $S$ and $L$ are, the better the auxiliary channel (\ref{eq:Upsilonk_def}) approximates the actual channel (\ref{eq:Yk}).
We remark that the auxiliary channel gives a \emph{valid} lower bound on $I(X;Y)$ even for small $S$ and $L$.

\subsection{Computing The Conditional Probability}
Suppose the input $\Xsamp^n$ has the distribution $p_{\Xsamp^n}$.
A Bayesian network for $\Xsamp^n,S^n,\Upsilon^n$ is shown in Fig. \ref{fig:bayesian_net_general}.
\begin{figure}
\centering
\resizebox{\ifdefined\twocolumnmode \columnwidth \else 0.75\textwidth \fi}{!}{\begin{tikzpicture}[->,>=stealth',shorten >=1pt,auto,node distance=1.5cm,semithick]
  \tikzstyle{var}=[circle,thick,draw=blue!75,fill=blue!20,minimum size=3pt]

  \node[var]  (S1) [		label=-45:$S_1$] {};
  \node[var]  (S2) [right of=S1,label=-45:$S_2$] {};
  \node[var]  (S3) [right of=S2,label=-45:$S_3$] {};
  \node[var]  (S4) [right of=S3,label=-45:$S_4$] {};
  \node[var]  (S5) [right of=S4,label=-45:$S_5$] {};
  \node[var]  (S6) [right of=S5,label=-45:$S_6$] {};
  \node[var]  (S7) [right of=S6,label=-45:$S_7$] {};
  \node[var]  (S8) [right of=S7,label=-45:$S_8$] {};
  \node[var]  (S9) [right of=S8,label=-45:$S_9$] {};

  \node[var]  (Y1) [above of=S1,label=right:$\Upsilon_1$] {};
  \node[var]  (Y2) [right of=Y1,label=right:$\Upsilon_2$] {};
  \node[var]  (Y3) [right of=Y2,label=right:$\Upsilon_3$] {};
  \node[var]  (Y4) [right of=Y3,label=right:$\Upsilon_4$] {};
  \node[var]  (Y5) [right of=Y4,label=right:$\Upsilon_5$] {};
  \node[var]  (Y6) [right of=Y5,label=right:$\Upsilon_6$] {};
  \node[var]  (Y7) [right of=Y6,label=right:$\Upsilon_7$] {};
  \node[var]  (Y8) [right of=Y7,label=right:$\Upsilon_8$] {};
  \node[var]  (Y9) [right of=Y8,label=right:$\Upsilon_9$] {};

  \node[var]  (X1) [above of=Y1,label=-45:$\Xsamp_1$] {};
  \node[var]  (X2) [right of=X1,label=-45:$\Xsamp_2$] {};
  \node[var]  (X3) [right of=X2,label=-45:$\Xsamp_3$] {};
  \node[var]  (X4) [right of=X3,label=-45:$\Xsamp_4$] {};
  \node[var]  (X5) [right of=X4,label=-45:$\Xsamp_5$] {};
  \node[var]  (X6) [right of=X5,label=-45:$\Xsamp_6$] {};
  \node[var]  (X7) [right of=X6,label=-45:$\Xsamp_7$] {};
  \node[var]  (X8) [right of=X7,label=-45:$\Xsamp_8$] {};
  \node[var]  (X9) [right of=X8,label=-45:$\Xsamp_9$] {};

  \node[var]  (W)  [above of=X5] {};

  \path 
  (S1) edge node {} (S2)
  (S2) edge node {} (S3)
  (S3) edge node {} (S4)
  (S4) edge node {} (S5)
  (S5) edge node {} (S6)
  (S6) edge node {} (S7)
  (S7) edge node {} (S8)
  (S8) edge node {} (S9)
  (X1) edge node {} (Y1)
  (X2) edge node {} (Y2)
  (X3) edge node {} (Y3)
  (X4) edge node {} (Y4)
  (X5) edge node {} (Y5)
  (X6) edge node {} (Y6)
  (X7) edge node {} (Y7)
  (X8) edge node {} (Y8)
  (X9) edge node {} (Y9)
  (S1) edge node {} (Y1)
  (S2) edge node {} (Y2)
  (S3) edge node {} (Y3)
  (S4) edge node {} (Y4)
  (S5) edge node {} (Y5)
  (S6) edge node {} (Y6)
  (S7) edge node {} (Y7)
  (S8) edge node {} (Y8)
  (S9) edge node {} (Y9)
  (W) edge node {} (X1)
  (W) edge node {} (X2)
  (W) edge node {} (X3)
  (W) edge node {} (X4)
  (W) edge node {} (X5)
  (W) edge node {} (X6)
  (W) edge node {} (X7)
  (W) edge node {} (X8)
  (W) edge node {} (X9)
;
\end{tikzpicture}}
\caption{Bayesian network for ${\Xsamp^n,S^n,\Upsilon^n}$ for $n=9$.}
\label{fig:bayesian_net_general}
\end{figure}
The probability $p_{\Upsilon^n|\Xsamp^n}(y^n|x^n)$ can be computed using
\begin{align}
 p_{\Upsilon^n|\Xsamp^n}(y^n|x^n) 
 = \sum_{s \in \mathcal{S}} \rho_n(s)
\end{align}
where we recursively compute 
\begin{align}
  \rho_{k}(s) 
  & \equiv p_{S_k,\Upsilon^k|\Xsamp^n}(s,y^k|x^n) \label{eq:rho_def} \\
  &\stackrel{(a)}{=}  \sum_{\tilde{s} \in \mathcal{S}} p_{S_{k-1},S_k,\Upsilon^k|\Xsamp^n}(\tilde{s},s,y^k|x^n) \nonumber \\
  &\stackrel{(b)}{=}  \sum_{\tilde{s} \in \mathcal{S}} 
      \rho_{k-1}(\tilde{s}) \ 
      p_{S_k,\Upsilon_k|S_{k-1},\Upsilon^{k-1},\Xsamp^n}(s,y_k|\tilde{s},y^{k-1},x^n) \nonumber \\
  &=  \sum_{\tilde{s} \in \mathcal{S}} 
      \rho_{k-1}(\tilde{s}) \ 
      Q(s|\tilde{s}) \ 
      p_{\Psi|\Xsamp,\Theta}(y_k|x_k,s)
  \label{eq:rho_recursive_general}
\end{align}
with the initial value
$\rho_{0}(s) = 1/|\mathcal{S}|$.
Step $(a)$ is a marginalization, 
$(b)$ follows from Bayes' rule and the definition of $\rho_k$ in (\ref{eq:rho_def}), 
while (\ref{eq:rho_recursive_general}) follows from the structure of Fig. \ref{fig:bayesian_net_general}.
We remark that (\ref{eq:rho_recursive_general}) is the same as with independent $X_1,\ldots,X_n$,
e.g., see equation (9) in \cite[Sec. IV]{MTR2013}.

\subsection{Computing The Marginal Probability}
Define $\X_m \equiv (X_{(m-1) L + 1},X_{(m-1) L + 2},\ldots,X_{(m-1) L + L})$.
Suppose the input \emph{symbols} are i.i.d. and
$\Xsymb{m} \in \mathcal{X}$ where $\mathcal{X}$ is a finite set.
Therefore, $p_{\Xsamp^n}$ has the form
\begin{align}
  p_{\Xsamp^n}(x^n)
  &= \prod_{m=1}^{\nsymb} p_{\X}(\x_m).
\end{align}
A Bayesian network for $\Xsamp^n,S^n,\Upsilon^n$ is shown in Fig. \ref{fig:bayesian_net_oversamp}.
\begin{figure}
\centering
\resizebox{\ifdefined\twocolumnmode \columnwidth \else 0.75\textwidth \fi}{!}{\begin{tikzpicture}[->,>=stealth',shorten >=1pt,auto,node distance=1.5cm,semithick]
  \tikzstyle{var}=[circle,thick,draw=blue!75,fill=blue!20,minimum size=3pt]

  \node[var]  (S1) [		label=-45:$S_1$] {};
  \node[var]  (S2) [right of=S1,label=-45:$S_2$] {};
  \node[var]  (S3) [right of=S2,label=-45:$S_3$] {};
  \node[var]  (S4) [right of=S3,label=-45:$S_4$] {};
  \node[var]  (S5) [right of=S4,label=-45:$S_5$] {};
  \node[var]  (S6) [right of=S5,label=-45:$S_6$] {};
  \node[var]  (S7) [right of=S6,label=-45:$S_7$] {};
  \node[var]  (S8) [right of=S7,label=-45:$S_8$] {};
  \node[var]  (S9) [right of=S8,label=-45:$S_9$] {};

  \node[var]  (Y1) [above of=S1,label=right:$\Upsilon_1$] {};
  \node[var]  (Y2) [right of=Y1,label=right:$\Upsilon_2$] {};
  \node[var]  (Y3) [right of=Y2,label=right:$\Upsilon_3$] {};
  \node[var]  (Y4) [right of=Y3,label=right:$\Upsilon_4$] {};
  \node[var]  (Y5) [right of=Y4,label=right:$\Upsilon_5$] {};
  \node[var]  (Y6) [right of=Y5,label=right:$\Upsilon_6$] {};
  \node[var]  (Y7) [right of=Y6,label=right:$\Upsilon_7$] {};
  \node[var]  (Y8) [right of=Y7,label=right:$\Upsilon_8$] {};
  \node[var]  (Y9) [right of=Y8,label=right:$\Upsilon_9$] {};

  \node[var]  (X1) [above of=Y1,label=-45:$\Xsamp_1$] {};
  \node[var]  (X2) [right of=X1,label=-45:$\Xsamp_2$] {};
  \node[var]  (X3) [right of=X2,label=-45:$\Xsamp_3$] {};
  \node[var]  (X4) [right of=X3,label=-45:$\Xsamp_4$] {};
  \node[var]  (X5) [right of=X4,label=-45:$\Xsamp_5$] {};
  \node[var]  (X6) [right of=X5,label=-45:$\Xsamp_6$] {};
  \node[var]  (X7) [right of=X6,label=-45:$\Xsamp_7$] {};
  \node[var]  (X8) [right of=X7,label=-45:$\Xsamp_8$] {};
  \node[var]  (X9) [right of=X8,label=-45:$\Xsamp_9$] {};

  \node[var]  (M1) [above of=X2,label=right:$\Xsymb{1}$] {};
  \node[var]  (M2) [above of=X5,label=right:$\Xsymb{2}$] {};
  \node[var]  (M3) [above of=X8,label=right:$\Xsymb{3}$] {};

  \path 
  (S1) edge node {} (S2)
  (S2) edge node {} (S3)
  (S3) edge node {} (S4)
  (S4) edge node {} (S5)
  (S5) edge node {} (S6)
  (S6) edge node {} (S7)
  (S7) edge node {} (S8)
  (S8) edge node {} (S9)
  (X1) edge node {} (Y1)
  (X2) edge node {} (Y2)
  (X3) edge node {} (Y3)
  (X4) edge node {} (Y4)
  (X5) edge node {} (Y5)
  (X6) edge node {} (Y6)
  (X7) edge node {} (Y7)
  (X8) edge node {} (Y8)
  (X9) edge node {} (Y9)
  (S1) edge node {} (Y1)
  (S2) edge node {} (Y2)
  (S3) edge node {} (Y3)
  (S4) edge node {} (Y4)
  (S5) edge node {} (Y5)
  (S6) edge node {} (Y6)
  (S7) edge node {} (Y7)
  (S8) edge node {} (Y8)
  (S9) edge node {} (Y9)
  (M1) edge node {} (X1)
  (M1) edge node {} (X2)
  (M1) edge node {} (X3)
  (M2) edge node {} (X4)
  (M2) edge node {} (X5)
  (M2) edge node {} (X6)
  (M3) edge node {} (X7)
  (M3) edge node {} (X8)
  (M3) edge node {} (X9)
;
\end{tikzpicture}}
\caption{Bayesian network for ${\Xsamp^n,S^n,\Upsilon^n}$ for $n=9$ and $L=3$.}
\label{fig:bayesian_net_oversamp}
\end{figure}
The probability $p_{\Upsilon^n}(y^n)$ can be computed using
\begin{align}
  p_{\Upsilon^n}(y^n) 
  = \sum_{s \in \mathcal{S}} \psi_{\nsymb}(s) 
\end{align}
where $\psi_{m}(s) \equiv p_{S_{m L},\Y^m}(s,\y^m) $
which can be computed using the recursion:
\begin{align}
  &\psi_{m}(s) \label{eq:psi_recursive} \\
  &=  \sum_{\tilde{\x} \in \mathcal{X}_L} p_{\X}(\tilde{\x}) \
      \sum_{\tilde{s} \in \mathcal{S}} \psi_{m-1}(\tilde{s}) \
      p_{S_{m L},\Y_m|S_{(m-1)L},\X_m}(s,\y_m|\tilde{s},\tilde{\x})
  \nonumber
\end{align}
with the initial value 
$\psi_{0}(s) = 1/|\mathcal{S}|$.
The set $\mathcal{X}_L$ is
\begin{align}
 \mathcal{X}_L = \{ x \cdot (g(\Delta),g(2\Delta),\ldots,g(L\Delta) ): x \in \mathcal{X} \}.
\end{align}
We remark that $|\mathcal{X}_L| = |\mathcal{X}|$ and not $|\mathcal{X}|^L$.
Next, we define
\begin{align}
  \chi_{m,L}(s,\tilde{s},\tilde{\x}) = p_{S_{m L},\Y_m|S_{(m-1)L},\X_m}(s,\y_m|\tilde{s},\tilde{\x}) 
  \label{eq:chi_def}
\end{align}
for $s,\tilde{s} \in \mathcal{S}$ and $\tilde{\x} \in \mathcal{X}_L$.
Computing $\chi_{m,L}(s,\tilde{s},\tilde{\x})$ is similar
to computing $\rho_n$ (see (\ref{eq:rho_recursive_general})).
Intuitively, this is because a block of $L$ samples in Fig. \ref{fig:bayesian_net_oversamp} has a structure similar to Fig. \ref{fig:bayesian_net_general}.
More precisely, 
$\chi_{m,L}(s,\tilde{s},\tilde{\x})$ can be computed recursively by using
\begin{align}
  &\chi_{m,\ell}(s,\tilde{s},\tilde{\x}) \label{eq:chi_recursive} \\
  &= \sum_{ \varsigma \in \mathcal{S} } 
     \chi_{m,\ell-1}(\varsigma,\tilde{s},\tilde{\x}) \ 
     Q\left( s| \varsigma \right) \
     p_{\Psi|X,\Theta} \left( y_{(m-1)L+\ell}|\tilde{x}_{\ell}, s\right)
     \nonumber
\end{align}
with the initial value
\begin{align}
  \chi_{m,0}(s,\tilde{s},\tilde{\x}) = \left\{
  \begin{array}{ll}
   1, & s=\tilde{s} \\
   0, & \text{otherwise}.
  \end{array}
  \right.
  \label{eq:chi_init}
\end{align}

Therefore, computing $p_{\Upsilon^n}(y^n)$ involves two levels of recursion:
1) a recursion over the symbols as described by (\ref{eq:psi_recursive}) and 
2) a recursion over the samples within a symbol as described by (\ref{eq:chi_recursive}).

\section{Numerical Simulations}
\label{sec:num-sim}
We use two pulses with a symbol-interval time support:
\begin{itemize}
\item 
A unit-energy square pulse
\begin{align}
g_1(t) = 
\frac{1}{\sqrt{\Ts}} \text{rect}\left(\frac{t}{\Ts}\right)
\end{align}
where 
\begin{align}
 \text{rect}(t) \equiv \left\{ 
  \begin{array}{ll}
  1,	& |t| \leq 1/2 , \\
  0,	& \text{otherwise}.
  \end{array}
 \right.
\label{eq:rect_def}
\end{align}

\item 
A unit-energy cosine-squared pulse
\begin{align}
g_2(t) = 
\frac{1}{\sqrt{\Ts/2}} \cos^2\left(\frac{\pi t}{\Ts}\right) \text{rect}\left(\frac{t}{\Ts}\right).
\end{align}
\end{itemize}

The first step of the algorithm is to sample a long sequence according to the true joint distribution of $\Xsamp^n$ and $Y^n$. 
To generate samples according to the original channel (\ref{eq:Yk}),
we must accurately represent digitally the continuous-time waveform (\ref{eq:rx_waveform}).
We use a simulation oversampling rate $L_{\text{sim}}$ = 1024 samples/symbol.
After the filter (\ref{eq:yt}), the receiver has $L$ samples/symbol distributed according to (\ref{eq:Yk}).
Next, to choose a proper sequence length, we follow the approach suggested in \cite{Arnold2006}:
for a candidate length, run the algorithm about 10 times (each with a new random seed)
and check whether all estimates of the information rate agree up to the desired accuracy.
We used $\nsymb = 10^4$ unless otherwise stated.

For efficient implementation of (\ref{eq:rho_recursive_general}), 
$p_{\Psi|\Xsamp,\Theta}(\cdot|\cdot,\cdot)$  
can be factored out of the summation to yield:
\begin{align}
  \rho_{k}(s) 
  &=  p_{\Psi|\Xsamp,\Theta}(y_k|x_k,s)
      \overbrace{\sum_{\tilde{s} \in \mathcal{S}} \rho_{k-1}(\tilde{s}) \ Q(s|\tilde{s})}^{\rho^\prime_{k}(s)}.
  \label{eq:rho_recursive_general_implement}
\end{align}
Moreover, 
since $Q(\cdot|\cdot)$ can be represented by a circulant matrix due to symmetry,
$\rho^\prime_k(\cdot)$ can be computed efficiently using the Fast Fourier Transform (FFT).
Similarly, the computation of (\ref{eq:chi_recursive}) can be done efficiently  
by factoring out $p_{\Psi|X,\Theta}(\cdot|\cdot,\cdot)$ and
by using the FFT.

\subsection{Very Large Linewidth}

\newcommand{\SimFigPath}{./figures-uber-newlegend/}			
\begin{figure}[t]
\centering
\includegraphics[width= {\ifdefined\twocolumnmode \columnwidth \else 0.66\textwidth \fi}]{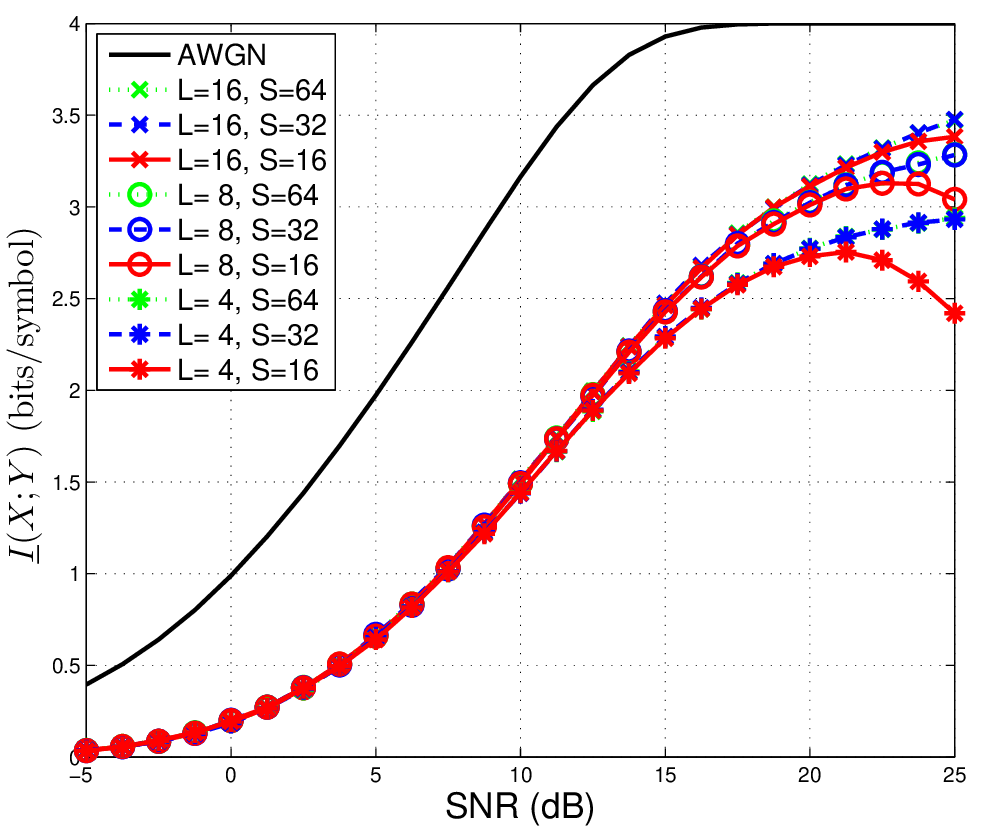}
\caption{Lower bounds on rates for 16-QAM, square transmit-pulse and multi-sample receiver at $f_{\text{HWHM}} \Ts = 0.125$.}
\label{fig:16qam_fhwhm125_sqrtx_optrx_osr}
\end{figure}
Suppose $f_{\text{HWHM}} \Ts = 0.125$ and the input symbols are independently and uniformly distributed (i.u.d.) 16-QAM.
Fig. \ref{fig:16qam_fhwhm125_sqrtx_optrx_osr} shows an estimate of $\underline{I}(X;Y)$ 
for a square transmit-pulse, 
i.e., $g(t) = g_1(t-\Ts/2)$
and an $L$-sample receiver  
with $L=4,8,16$ and $S=16,32,64$.
The curves with $S=64$ are indistinguishable from the curves with $S=32$ over the entire SNR range for all values of $L$,
and hence it seems that $S=32$ is adequate up to 25 dB.
Even $S=16$ seems to be adequate up to 20 dB.
We remark that the non-monotonicity of some lower bounds is an artifact of the auxiliary channel being ``far'' from the true channel,
specifically because the number $S$ of quantization levels of Auxiliary Channel II is too low for some SNR values.
The important trend in Fig. \ref{fig:16qam_fhwhm125_sqrtx_optrx_osr}
is that higher oversampling rate $L$ is needed at high $\SNR$
to extract all the information from the received signal.
For example,
$L=4$ suffices up to $\SNR$ $\sim$ 10 dB,
$L=8$ suffices up to $\SNR$ $\sim$ 15 dB but
$L \geq 16$ is needed beyond that.
It was pointed out in \cite{Arnold2006} that the lower bounds can be interpreted as 
the information rates achieved by mismatched decoding.
For example, $\underline{I}(X;Y)$ for $L=8$ and $S \geq 32$ in Fig. \ref{fig:16qam_fhwhm125_sqrtx_optrx_osr} is essentially
the information rate achieved by a multi-sample (8-sample) receiver
that uses maximum-likelihood decoding for the simplified channel (\ref{eq:Psik_def})
when it is operated in the original channel (\ref{eq:Yk}).

\begin{figure}[t]
\centering
\includegraphics[width= {\ifdefined\twocolumnmode \columnwidth \else 0.66\textwidth \fi}]{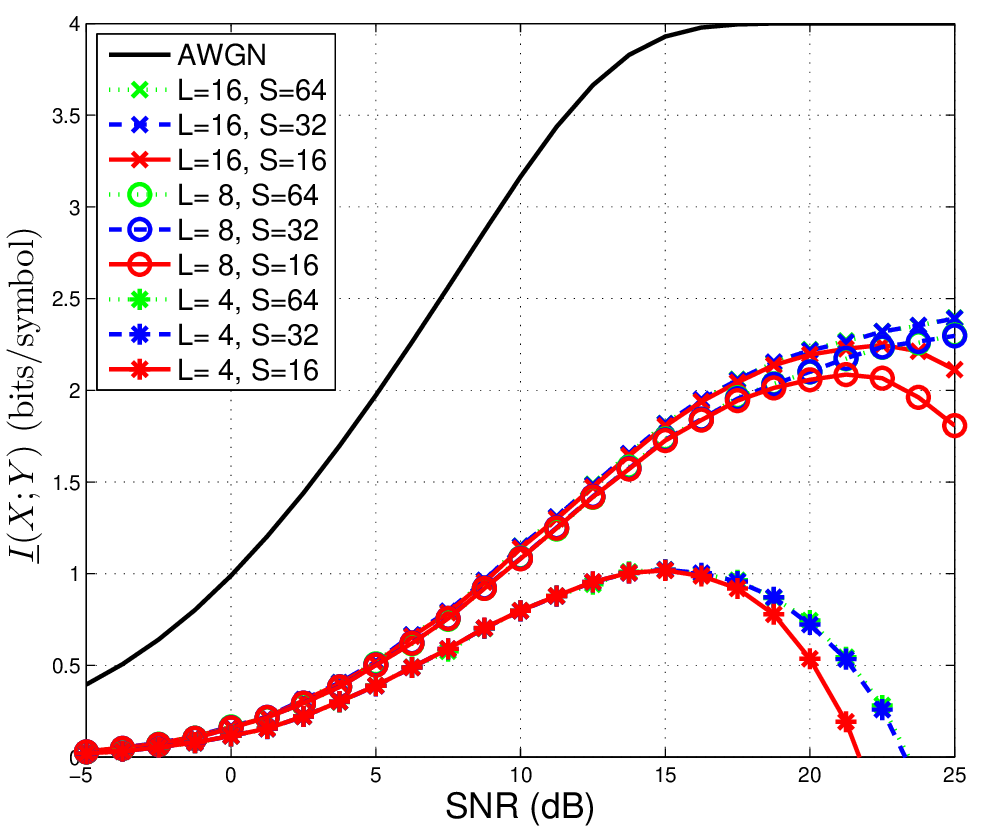}
\caption{Lower bounds on rates for 16-QAM, cosine-squared transmit-pulse and multi-sample receiver at $f_{\text{HWHM}} \Ts = 0.125$.}
\label{fig:16qam_fhwhm125_cos2tx_optrx_osr}
\end{figure}

Fig. \ref{fig:16qam_fhwhm125_cos2tx_optrx_osr} shows an estimate of $\underline{I}(X;Y)$ 
for a cosine-squared transmit-pulse, 
i.e., $g(t) = g_2(t-\Ts/2)$
and an $L$-sample receiver  
at $L=4,8,16$ and $S=16,32,64$.
We find that $S = 32$ suffices up to $\sim$ 25 dB.
We see in Fig. \ref{fig:16qam_fhwhm125_cos2tx_optrx_osr}
the same trend as in Fig. \ref{fig:16qam_fhwhm125_sqrtx_optrx_osr}:
higher $L$ is needed at higher $\SNR$.
Furthermore,
the square pulse is better than the cosine-squared pulse for the same oversampling rate $L$.
It might seem strange the the cosine-squared pulse curves are substantially lower
than the square pulse curves despite taking into account the pulse shape in the auxiliary
channel model, see (\ref{eq:Upsilonk_def}) and (\ref{eq:Xk_def}). 
One obtains insight on this effect with the following Gedankenexperiment. 
Suppose we use a pulse that is narrow and peaky in time. 
Then an integrate-and-dump receiver with oversampling puts out 
essentially only one useful sample per symbol. 
One may as well, therefore, use only one sample per symbol.

\subsection{Large Linewidth}

\begin{figure}[t]
\centering
\includegraphics[width= {\ifdefined\twocolumnmode \columnwidth \else 0.66\textwidth \fi}]{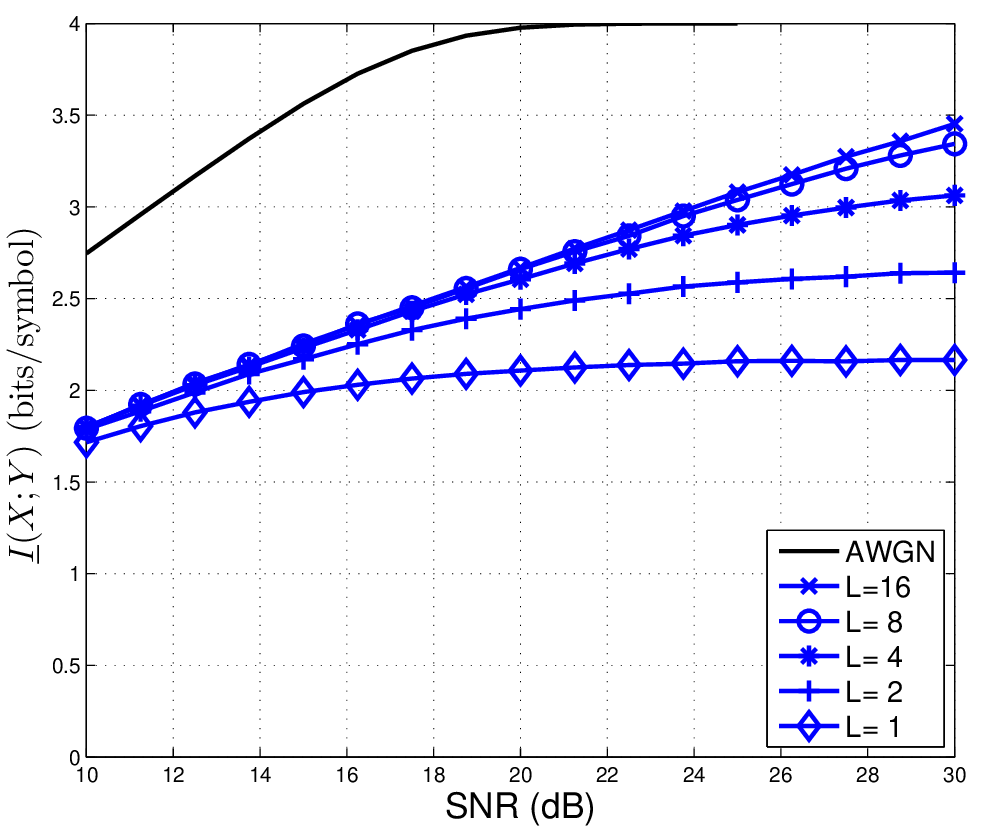}
\caption{Lower bounds on rates for 16-PSK, square transmit-pulse and multi-sample receiver at $f_{\text{HWHM}} \Ts = 0.0125$.}
\label{fig:16psk_fhwhm12_sqrtx_optrx_osr}
\end{figure}


As the linewidth decreases, the benefit of oversampling at the receiver becomes apparent only at higher $\SNR$.
For example, for $f_{\text{HWHM}} \Ts = 0.0125$ and i.u.d. 16-PSK input,
Fig. \ref{fig:16psk_fhwhm12_sqrtx_optrx_osr} shows an estimate of $\underline{I}(X;Y)$ 
for a square transmit-pulse 
and an $L$-sample receiver  
at $L=1,2,4,8,16$ and $S=64$.
We see that
$L=4$ suffices up to $\SNR$ $\sim$ 19 dB,
$L=8$ suffices up to $\SNR$ $\sim$ 24 dB and
only beyond that $L \geq 16$ is necessary.

We conclude from Fig. \ref{fig:16qam_fhwhm125_sqrtx_optrx_osr}--\ref{fig:16psk_fhwhm12_sqrtx_optrx_osr} that
the required $L$ depends on
1) the linewidth $f_{\text{FWHM}}$ of the phase noise;
2) the pulse $g(t)$; and
3) the $\SNR$.

\subsection{Comparison With Other Models}
We compare the discrete-time model of the multi-sample receiver
with other discrete-time models.
The simulation parameters for our model (labeled GK) are $\nsymb = 10^4$, $L = 16$ (with $L_{\text{sim}} = 1024$) and 
$S=64$ for 16-QAM ($S=128$ was too computationally intensive) and $S=128$ for QPSK.
We set $\gamma^2 = 2 \pi \beta \Ts$.
In Fig. \ref{fig:raheli_comparison}, we show curves
for the approximate symbol-rate model (\ref{eq:dt-mfrx-approx}), which are labeled ASM.
The simulation parameters for this model are  $\nsymb = 10^5$ and $S=128$.

\begin{figure}[t]
\centering
\includegraphics[width= {\ifdefined\twocolumnmode \columnwidth \else 0.66\textwidth \fi}]{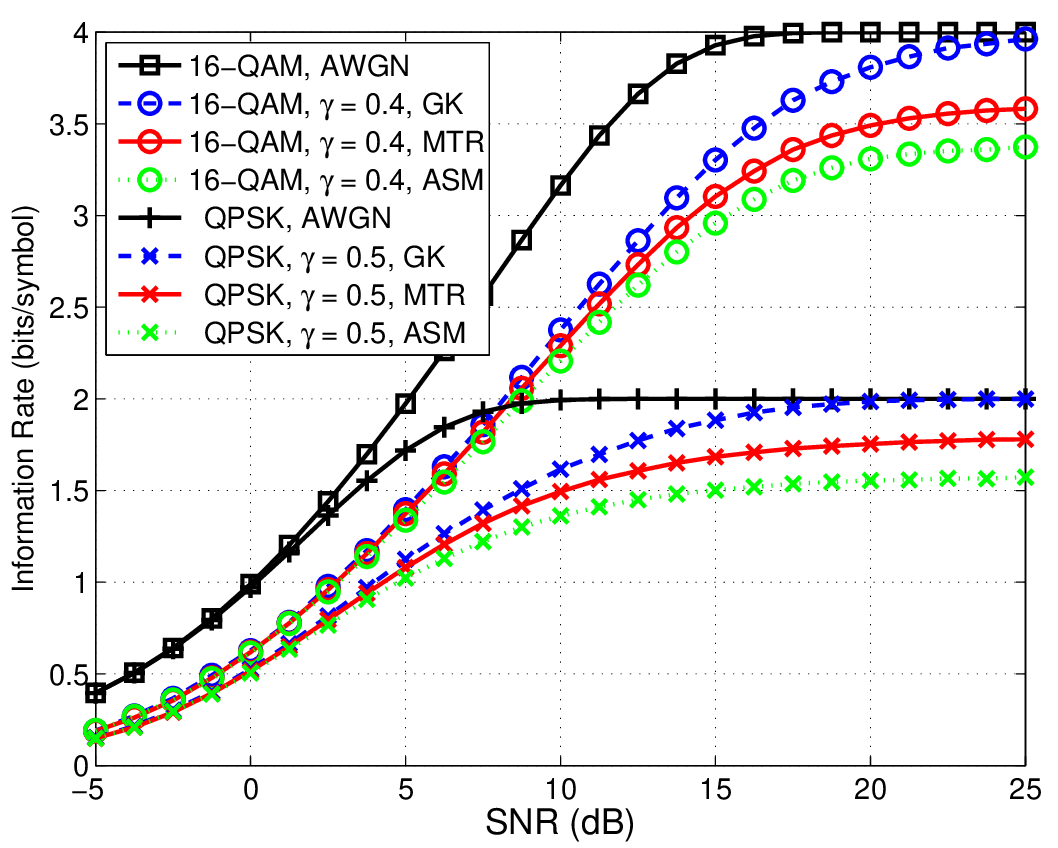}
\caption{Comparison of information rates for different models.}
\label{fig:raheli_comparison}
\end{figure}

We also show curves for the Martal\`{o}-Tripodi-Raheli (MTR) model \cite{MTR2013} in Fig. \ref{fig:raheli_comparison}.
For the sake of comparison, we adapt the model in \cite{MTR2013} 
from a square-root raised-cosine pulse to a square pulse 
and write the ``matched'' filter output $\{V_m\}$ as
\begin{align}
 V_m = \sum_{\ell = 1}^{L_{\text{sim}}} \Psi_{(m-1)L+\ell}
\end{align}
where $m=1,\ldots,\nsymb$ and $\Psi_k$ is defined in (\ref{eq:Psik_def}).
The auxiliary channel is 
\begin{align}
 Y_m = \Xsymb{m} ~ e^{j \Theta_m} + Z_m,
 \qquad m \geq 1
\end{align}
where 
the process $\{Z_m\}$ is an i.i.d. circularly-symmetric complex Gaussian process with mean $0$ and 
$\mathbb{E}[ |Z_m|^2 ] = \sigma^2_N \Ts$
while the process $\{\Theta_m\}$ is a first-order Markov process (not a Wiener process)
with a time-invariant transition probability, 
i.e., for $k \geq 2$ and all $\theta_k,\theta_{k-1} \in [-\pi,\pi)$, we have
$p_{\Theta_k|\Theta_{k-1}}(\theta_k|\theta_{k-1}) = p_{\Theta_2|\Theta_1}(\theta_k|\theta_{k-1})$.
Furthermore, the phase space is quantized to a finite number $S$ of states
and the transition probabilities are estimated by means of simulation.
The probabilities are then used to compute a lower bound on the information rate.
The simulation parameters for the MTR model are  $\nsymb = 10^5$, $L_{\text{sim}}=16$ and $S=128$.

We point out that MTR rates are essentially achievable rates for the symbol-rate model with filtering (\ref{eq:dt-mfrx-full}) 
that are valid up to some SNR that grows with the oversampling rate.
We observe from Fig. \ref{fig:raheli_comparison} the following:
\begin{enumerate}
\item 
The rates of ASM and MTR saturate well below
the rate achieved by the multi-sample receiver, 
which implies that symbol-rate sampling results in information loss at high SNR.

\item
The multi-sample receiver rates are indistinguishable from the MTR rates at low SNR which
implies that matched filtering with symbol-rate sampling incurs practically no information loss, i.e.,
using (\ref{eq:dt-mfrx-approx}) to compute rates at low/moderate SNR gives accurate results.

\item
MTR is indistinguishable from ASM at low SNR  
which implies that approximating (\ref{eq:dt-mfrx-full}) as (\ref{eq:dt-mfrx-approx}) can be justified at low SNR, i.e.,
the filtering effect may be ignored.
\end{enumerate}

We remark that the saturation at high SNR of the rates of the symbol-rate models at a value lower than the entropy of the source should not be surprising.
The impact of the additive noise is smaller at high SNR, but the effect of phase noise and/or filtering are not necessarily reduced.
For example, consider the memoryless channel $Y = H X + Z$ and  a fixed-size constellation (e.g. $M$-QAM). 
We have the following upper bound (by using a genie-aided argument)
\begin{align}
I(X;Y) &
\leq I(X;Y Z) \nonumber \\&
= \H(X) - \H(X|H X).
\end{align}
The term $\H(X|H X)$ represents impact of the ambiguity due to the randomness of $H$.
For the special case of $Y = e^{j\Theta} X + Z$, we have
\begin{align}
I(X;Y) \leq \H(X) - \H(\Phi_X|\Phi_X+\Theta)
\end{align}
i.e., one may fully recover the amplitude of $X$ but not the phase of $X$ because of the ambiguity due to $\Theta$.
Moreover, for $Y = e^{j\Theta} X + Z$, there is a more general upper bound on the contribution of phase modulation given by
\begin{align}
I(\Phi_X;Y|X_A) & 
\leq I(\Phi_X;Y|X_A,Z) \nonumber \\&
= I(\Phi_X;\Phi_X+\Theta|X_A) \nonumber \\&
= h(\Phi_X+\Theta|X_A) - h(\Theta) \nonumber \\&
\leq \log(2\pi) - h(\Theta)
\end{align}
This upper bound is more general than the earlier one in the sense that it is valid for any input constellation (of any size).
Unfortunately, we do not have a simple upper bound expression similar to $\log(2\pi) - h(\Theta)$ to predict the cap on the information rate for the multi-sample receiver.
Obtaining such an upper bound is beyond the scope of this paper.


\section{Discussion}
\label{sec:discuss}

\begin{enumerate}[wide, labelwidth=!, labelindent=0pt, itemsep=8pt]
\item
The authors of \cite{Foschini1988Comm} treated on-off keying in the presence of Wiener phase noise
by using a double-filtering receiver, which is composed of
an intermediate frequency (IF) filter, followed by an envelope detector (square-law device) and then a post-detection filter.
They showed that by optimizing the IF receiver bandwidth the double-filtering receiver outperforms the single-filtering (matched filter) receiver.
Furthermore, they showed via simulation that the optimum IF bandwidth increases with the SNR.
This is similar to our result in the sense that
we require the number of samples per symbol to increase with the SNR 
in order to achieve a rate that grows logarithmically with the SNR.
We remark that
Dallal and Shamai \cite{ShamaiCom91}
derived an upper bound on the error probability of (a variant of) the double filtering receiver 
for systems employing noncoherent demodulation, such as frequency shift keying (FSK), on-off keying (OOK) and pulse-position modulation (PPM).
Relevant analysis can be found in \cite{ShamaiIT92}, \cite{Shamai93} and references therein.
The analysis typically involves moments of filtered Wiener phase noise.

\item
The phase modulation pre-log of 1/4 in (\ref{eq:I_PhiX_Y|XA_approxmodel_theorem}) requires only 2 samples per symbol 
for which the time resolution $1/\Delta$ grows as the square root of the SNR. 
In other words, 
one does not need an analog-to-digital converter (ADC) with sampling every $T_s/L$ seconds.
Instead, one may use two ADCs that sample every $T_s$ seconds and the sampling instants are offset by $T_s/L$ seconds. 
For large $L$, two ADCs with low sampling rate would be less complex than one ADC with a very high sampling rate.
It is interesting to contemplate whether another receiver, e.g., a non-coherent receiver, 
can achieve the maximum amplitude modulation pre-log of 1/2 but requires only 1 sample per symbol. 
If so, one would need only 3 samples per symbol to achieve a pre-log of 3/4.


\item
The paper \cite{GhozlanISIT2013} shows that a pre-log of 1/2 is achievable 
for the model (\ref{eq:Yk}) when $L \propto \sqrt{\SNR}$.
Theorem \ref{theorem:oversamp-fullmodel} implies that a pre-log of 1/2 can be achieved using $L \propto \sqrt[3]{\SNR}$
and this perhaps can be reduced further.
But using $L \propto \sqrt{\SNR}$ may simplify the receiver design.
Specifically, the amplitude decoder designed for the auxiliary channel in \cite{GhozlanISIT2013} can be used for any phase noise linewidth $\beta$ 
because the auxiliary channel is (\ref{eq:Q_V|XA}) with $\tilde{G} = 1$ which is independent of $\beta$.
In contrast, using $L = \sqrt[3]{\beta^2 \SNR}$ requires different amplitude decoders for different linewidths
because the auxiliary channel in this case is (\ref{eq:Q_V|XA}) with $\tilde{G}=\mathbb{E}[G]$ which depends on $\beta$.

\item
For the multi-sample approximate model (\ref{eq:Yk_approx}), Barletta and Kramer \cite{BarlettaITW2015} derived an upper bound on the rate and showed that if
\begin{align}
 L = \left\lceil \SNR^\alpha \right\rceil
\end{align}
where $0 < \alpha \leq 1$, then the capacity pre-log is \emph{at most} $(1+\alpha)/2$
which implies that $L \propto \sqrt{\SNR}$ is optimal for achieving a pre-log of 3/4.
We remark that the best known oversampling factor to achieve a pre-log of 3/4 for the model (\ref{eq:Yk}) is $L \propto \sqrt{\SNR}$.
We also remark that the upper bound derived in \cite{BarlettaITW2015} was subsequently specialized to the additive-noise-free case in \cite{MTR2016} (cf. (141)-(143)).

\item
We briefly discuss sufficient statistics for the continuous-time phase noise channel.
One expects intuitively at any SNR that
\begin{align}
	\lim_{L \rightarrow \infty} I(X^\nsymb;Y^{\nsymb L}) = I(x(t); r(t)).
\end{align}
This is because at infinite sampling rate, one captures essentially the entire received waveform.
In general, it seems that an infinite number of projections per symbol are needed 
to obtain sufficient statistics for all SNRs.
For example, consider 
the Karhunen-Lo\`{e}ve expansion of $U(t) = e^{j \Theta(t)}$ on $t \in [0,\Ts]$, for which we have\footnote{
We use $\stackrel{\text{m.s.}}{=}$ to indicate convergence in mean-square sense.}
\begin{align}
	U(t) \stackrel{\text{m.s.}}{=} \sum_{i=0}^\infty U_i f_i(t)
\end{align}
where $\{f_i(t)\}$ are the orthonormal functions \cite[Sec. 6.1]{AzizogluPHD1991}:
\begin{align}
	 f_i(t) = \sqrt{\frac{2}{\Ts + \lambda_i}}
   \left\{
	 \begin{array}{l}
			\cos(\alpha_i(t-\Ts/2)), ~ i \text{ odd} \\
			\sin(\alpha_i(t-\Ts/2)), ~ i \text{ even}
	 \end{array}
	\right.
\end{align} 
where $\alpha_i$ is the solution to
\begin{align}
2 \arctan(\alpha_i/(\beta\pi)) + \alpha_i \Ts = i \pi, \quad i=0,1,\ldots
\end{align}
and the $\{U_i\}$ are uncorrelated random variables with variance $\lambda_i$
where
\begin{align}
\lambda_i = \frac{2\pi\beta}{\pi^2 \beta^2+\alpha_i^2}.
\end{align}
Suppose that one symbol $X$ is transmitted and consider a rectangular pulse for simplicity. We have
\begin{align}
	r(t) = \Xsymbol ~ e^{j \Theta(t)} + n(t) = \sum_{i=0}^\infty \Xsymbol U_i ~ f_i(t) + n(t).
\end{align}
We conclude, in this simplified case, that
\begin{align}
	I(\Xsymbol;\{\Gamma_i\}_{i=0}^\infty) = I(x(t); r(t))
\end{align}
where
\begin{align}
\Gamma_i = \int_0^{\Ts} r(t) f_i^*(t) dt, \qquad i=0,1,\ldots
\label{eq:dt-kl-expansion}
\end{align}
i.e, 
$\{\Gamma_i\}_{i=0}^\infty$ are sufficient statistics for $\Xsymbol$.
Therefore, an infinite number of projections per symbol seems to be needed 
to have sufficient statistics in general.
Moreover, we believe that a finite subset of $\{\Gamma_i\}_{i=0}^\infty$ 
captures essentially all the information in the received waveform at a given SNR, 
and that larger subsets would be needed at larger SNRs.
The discrete-time model based on the multi-sample receiver in Fig. \ref{fig:rx_oversamp} seems to be easier to analyze than the discrete-time model (\ref{eq:dt-kl-expansion}).

\item
Minimum energy per information bit is an important metric at low SNR.
For example, Fig. \ref{fig:ebno} plots the information rates of Fig. \ref{fig:16qam_fhwhm125_sqrtx_optrx_osr} against $E_b/N_0$, 
where the energy per information bit is $E_b = P/R$ and $R$ is the information rate in bits/symbol.
\begin{figure}
   \centering
	 \includegraphics[width= {\ifdefined\twocolumnmode \columnwidth \else 0.66\textwidth \fi}]{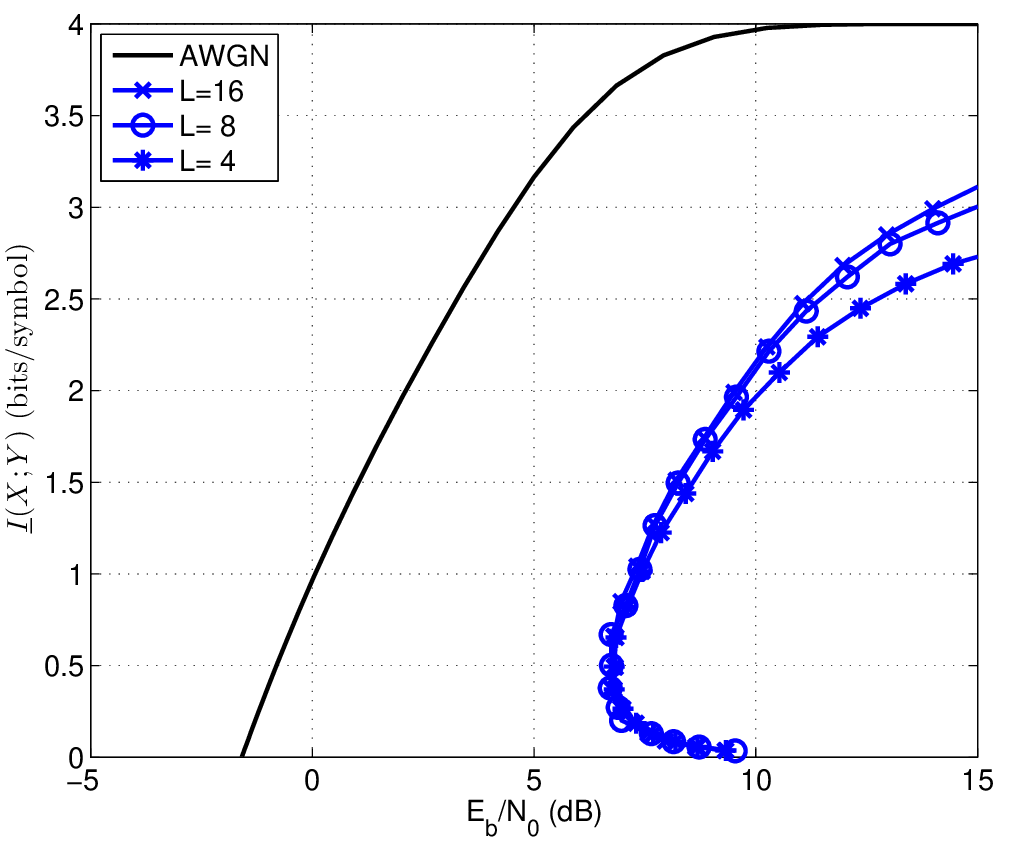}
	 \caption{Rates vs $E_b/N_0$ for 16-QAM, square transmit-pulse and multi-sample receiver at $f_{\text{HWHM}} \Ts = 0.125$.}
	 \label{fig:ebno}
\end{figure}
We notice a significant impact on the minimum $E_b$ due to phase noise.
Furthermore, oversampling does not affect the rate at low $E_b/N_0$,
which suggests that using symbol-rate sampling suffices 
and that several results available in the literature on noncoherent fading channels 
and discrete-time phase noise channels (with the commonly-used approximation)
should remain valid, e.g., see Fig. 2--Fig. 5 in \cite{Barbieri2011}.

\end{enumerate}


\section{Conclusion}
\label{sec:conc}
We studied a waveform channel impaired by Wiener phase noise and AWGN.
A discrete-time channel model based on oversampling was derived taking into account filtering effects on phase noise.
At high SNR, the multi-sample receiver achieves rates that grow logarithmically with SNR
with at least a 1/2 pre-log factor,
if the number of samples per symbol grows with the cubic root of the SNR.
In addition, we computed via numerical simulations lower bounds on the information rates achieved by the multi-sample receiver.
We observed that the required oversampling rate depends on
the linewidth of the phase noise, the shape of the transmit-pulse and the SNR.
The results demonstrate that multi-sample receivers increase the information rate
for both strong and weak phase noise at high SNR.
We compared our results with the results obtained by using other discrete-time models.
Finally, we showed for an approximate discrete-time model of the multi-sample receiver
that phase modulation achieves a pre-log of at least 1/4 while amplitude modulation achieves a pre-log of 1/2,
if the number of samples per symbol grows with the square root of the SNR. 
Thus, we demonstrated that the overall capacity pre-log is $3/4$ which is
greater than the capacity pre-log of the (approximate) discrete-time Wiener phase noise channel with only one sample per symbol.
The impact of oversampling on the capacity of 
non-Wiener phase noise channels, or multi-input multi-output (MIMO) phase noise channels,
are open problems.

\section*{Acknowledgement}
We thank the reviewers and the Associate Editor for useful comments.

\appendices 


\section{Amplitude Modulation}
\label{sec:amplitude-lower-bound}
This Appendix provides a proof of Theorem \ref{theorem:oversamp-fullmodel}.
We set $\Ts=1$ for simplicity.
Since $X_{A}^\nsymb$ is i.i.d.,
we have
\begin{align}
I(X_{A}^\nsymb;\Y^\nsymb)
&\stackrel{(a)}{=} \sum_{k=1}^\nsymb I(X_{A,k};\Y^\nsymb|X_{A}^{k-1}) \nonumber \\
&\stackrel{(b)}{=} \sum_{k=1}^\nsymb \H(X_{A,k}) - \H(X_{A,k}|\Y^\nsymb ~ X_{A}^{k-1}) \nonumber \\
&\stackrel{(c)}{\geq} \sum_{k=1}^\nsymb I(X_{A,k};\Y_k) \nonumber \\
&\stackrel{(d)}{\geq} \sum_{k=1}^\nsymb I(X_{A,k};V_k)
\label{eq:I_XA1n_Y1n_LB}
\end{align}
where
\begin{align}
V_k = \sum_{\ell=1}^L |Y_{(k-1)L+\ell}|^2.
\label{eq:V_def}
\end{align}
Step
$(a)$ follows from the chain rule of mutual information, 
$(b)$ follows from the independence of $X_{A,1},X_{A,2},\ldots,X_{A,b}$, 
$(c)$ holds because conditioning does not increase entropy, and
$(d)$ follows from the data processing inequality.
Since $X_{A}^\nsymb$ is identically distributed, we find that $V^\nsymb$ is also identically distributed and 
we have, for $k \geq 2$,
\begin{align}
I(X_{A,k};V_k) = I(X_{A,1};V_1).
\label{eq:I_XAk_Vk}
\end{align}

In the rest of this section, we consider only one symbol ($k=1$) and drop the time index.
%
By combining (\ref{eq:V_def}) and (\ref{eq:Yk}), we have
\begin{align}
V 
&= \sum_{\ell=1}^L \left( X_A^2 \Delta^2 |F_\ell|^2 + 2 X_A \Delta \Re[ e^{j \Phi_X} e^{j \Theta_\ell} F_\ell N_\ell^* ] + |N_\ell|^2 \right) \nonumber \\
&= X_A^2 \Delta G+ 2 X_A \Delta Z_1 + Z_0
\label{eq:V_direct}
\end{align}
where $G$, $Z_1$ and $Z_0$ are defined as
\begin{align}
G &\equiv \frac{1}{L} \sum_{\ell=1}^L |F_\ell|^2
\label{eq:G_def} \\
Z_1 &\equiv \sum_{\ell=1}^L \Re[ e^{j \Phi_X} e^{j \Theta_\ell} F_\ell N_\ell^* ]
\label{eq:Z1_def} \\
Z_0 &\equiv \sum_{\ell=1}^L |N_\ell|^2.
\label{eq:Z0_def} 
\end{align}
The second-order statistics of $Z_1$ and $Z_0$ are
\begin{align}
\begin{array}{ll}
\mathbb{E}[ Z_1 ] = 0 		& \textsf{Var}[ Z_1 ] = \mathbb{E}[G] {\sigma^2_N}/{2} \\
\mathbb{E}[ Z_0 ] = \sigma^2_N 	& \textsf{Var}\left[ Z_0 \right] = \sigma^4_N \Delta \\
\mathbb{E}\left[ Z_1 (Z_0-\mathbb{E}[Z_0]) \right] = 0.
\end{array}
\label{eq:second-order-stats}
\end{align} 


By using the auxiliary-channel lower bound (cf. (\ref{eq:I_AuxCh})), we have
\begin{align}
I(X_A;V)
&\geq \mathbb{E}[-\log{q_V(V)}] - \mathbb{E}[-\log{q_{V|X_A}(V|X_A)}]
\label{eq:I_XA_V_LB_AuxCh}
\end{align}
where $q_{V|X_A}(v|x_A)$ is the conditional distribution of the auxiliary channel,
which we choose to be
\begin{align}
q_{V|X_A}(v|x_A) 
= \frac{1}{\sqrt{4 \pi x_A^2 \Delta^2 \sigma^2_N}}
\exp\left(- \frac{(v-x_A^2 \Delta \tilde{G} - \sigma^2_N)^2}{4 x_A^2 \Delta^2 \sigma^2_N} \right)
\label{eq:Q_V|XA}
\end{align}
with $0<\tilde{G}\leq 1$ and 
\begin{align}
q_V(v) = \int_0^\infty p_{X_A}(x_A) q_{V|X_A}(v|x_A) dx_A
\label{eq:QV_def}
\end{align}
where $p_{X_A}(\cdot)$ is the (true) pdf of $X_A$.
The intuition behind the choice of (\ref{eq:Q_V|XA}) is given in Remark \ref{remark:amplitude-intuition} below.
It follows that
\ifdefined\twocolumnmode{
\begin{align}
&\mathbb{E}[ -\log( q_{V|X_A}(V|X_A) ) ]
= \mathbb{E}\left[ \frac{(V-X_A^2 \Delta \tilde{G} - \sigma^2_N)^2}{4 X_A^2 \Delta^2 \sigma^2_N} \right] \nonumber \\& \qquad \qquad
+ \log{\Delta} 
+ \frac{1}{2} \log(4 \pi\sigma^2_N)
+ \frac{1}{2} \mathbb{E}[ \log(X_A^2) ]
\label{eq:log_Q_V|XA_original}
\end{align}
}\else{
\begin{align}
\mathbb{E}[ -\log( q_{V|X_A}(V|X_A) ) ]
= \mathbb{E}\left[ \frac{(V-X_A^2 \Delta \tilde{G} - \sigma^2_N)^2}{4 X_A^2 \Delta^2 \sigma^2_N} \right]
+ \log{\Delta} 
+ \frac{1}{2} \log(4 \pi\sigma^2_N)
+ \frac{1}{2} \mathbb{E}[ \log(X_A^2) ]
\label{eq:log_Q_V|XA_original}
\end{align}
}\fi
and by using (\ref{eq:V_direct}) and (\ref{eq:second-order-stats}), we have
\ifdefined\twocolumnmode{
\begin{align}
&\mathbb{E}\left[ \frac{(V-X_A^2 \Delta \tilde{G} - \sigma^2_N)^2}{4 X_A^2 \Delta^2 \sigma^2_N} \right] \nonumber\\
&=\frac{1}{4 \sigma^2_N} \mathbb{E}[ X_A^2 ] \mathbb{E}\left[ (G-\tilde{G})^2 \right]
+ \frac{1}{  \sigma^2_N} \mathbb{E}[ Z_1^2 ] \nonumber\\&
+ \frac{1}{4 \Delta^2 \sigma^2_N} \mathbb{E}\left[ \frac{1}{X_A^2} \right] \mathbb{E}\left[ (Z_0-\sigma^2_N)^2 \right]
+ \frac{1}{  \sigma^2_N} \mathbb{E}[X_A] \mathbb{E}\left[ (G-\tilde{G}) Z_1 \right] \nonumber\\&
+ \frac{1}{2 \Delta \sigma^2_N} \mathbb{E}[G-1] \mathbb{E}[Z_0-\sigma^2_N] 
+ \frac{1}{  \Delta \sigma^2_N} \mathbb{E}\left[ \frac{1}{X_A}\right] \mathbb{E}\left[ Z_1 (Z_0-\sigma^2_N) \right] \nonumber \\
&=\frac{1}{4 \sigma^2_N} P \ \mathbb{E}\left[ (G-\tilde{G})^2 \right] 
+ \frac{1}{2} \mathbb{E}[G]
+ \frac{\sigma^2_N}{4 \Delta} \mathbb{E}\left[ \frac{1}{X_A^2} \right]
\label{eq:big_expectation}
\end{align}
}\else{
\begin{align}
\mathbb{E}\left[ \frac{(V-X_A^2 \Delta \tilde{G} - \sigma^2_N)^2}{4 X_A^2 \Delta^2 \sigma^2_N} \right] 
&=\frac{1}{4 \sigma^2_N} \mathbb{E}[ X_A^2 ] \mathbb{E}\left[ (G-\tilde{G})^2 \right]
+ \frac{1}{  \sigma^2_N} \mathbb{E}[ Z_1^2 ]
+ \frac{1}{4 \Delta^2 \sigma^2_N} \mathbb{E}\left[ \frac{1}{X_A^2} \right] \mathbb{E}\left[ (Z_0-\sigma^2_N)^2 \right] \nonumber\\&
+ \frac{1}{  \sigma^2_N} \mathbb{E}[X_A] \mathbb{E}\left[ (G-\tilde{G}) Z_1 \right]
+ \frac{1}{2 \Delta \sigma^2_N} \mathbb{E}[G-1] \mathbb{E}[Z_0-\sigma^2_N] 
+ \frac{1}{  \Delta \sigma^2_N} \mathbb{E}\left[ \frac{1}{X_A}\right] \mathbb{E}\left[ Z_1 (Z_0-\sigma^2_N) \right] \nonumber \\
&=\frac{1}{4 \sigma^2_N} P \ \mathbb{E}\left[ (G-\tilde{G})^2 \right] 
+ \frac{1}{2} \mathbb{E}[G]
+ \frac{\sigma^2_N}{4 \Delta} \mathbb{E}\left[ \frac{1}{X_A^2} \right]
\label{eq:big_expectation}
\end{align}
}\fi
where we also used
\begin{align}
 \mathbb{E}\left[ (G-\tilde{G}) Z_1 \right] = 0.
\end{align}
Substituting (\ref{eq:big_expectation}) into (\ref{eq:log_Q_V|XA_original}) and using $\mathbb{E}[G] \leq 1$ yield
\ifdefined\twocolumnmode{
\begin{align}
&\mathbb{E}[ -\log( q_{V|X_A}(V|X_A) ) ] \nonumber \\
&\leq \log{\Delta}
+ \frac{1}{2} \log(4 \pi\sigma^2_N)
+ \frac{1}{2} \mathbb{E}[ \log(X_A^2) ] \nonumber \\&
+ \frac{P}{4 \sigma^2_N} \mathbb{E}\left[ (G-\tilde{G})^2 \right]  + \frac{1}{2} + \frac{\sigma^2_N}{4 \Delta} \mathbb{E}\left[ \frac{1}{X_A^2} \right].
\label{eq:log_Q_V|XA}
\end{align}
}\else{
\begin{align}
\mathbb{E}[ -\log( q_{V|X_A}(V|X_A) ) ]
&\leq \log{\Delta}
+ \frac{1}{2} \log(4 \pi\sigma^2_N)
+ \frac{1}{2} \mathbb{E}[ \log(X_A^2) ] 
+ \frac{P}{4 \sigma^2_N} \mathbb{E}\left[ (G-\tilde{G})^2 \right]  + \frac{1}{2} + \frac{\sigma^2_N}{4 \Delta} \mathbb{E}\left[ \frac{1}{X_A^2} \right].
\label{eq:log_Q_V|XA}
\end{align}
}\fi

It is convenient to define $X_P \equiv X_A^2$.
The input distribution of $X_P$ is
\begin{align}
p_{X_P}(x_P) = \left\{ 
  \begin{array}{ll}
  \frac{1}{\lambda} \exp\left(-\frac{x_P-\lambda}{\lambda}\right),	& x_P \geq \lambda \\
  0,			& \text{otherwise}
  \end{array}
 \right.
\label{eq:P_XP}
\end{align}
where $\lambda = P/2$.
It follows from  (\ref{eq:QV_def}) and (\ref{eq:P_XP}) that
\begin{align}
q_V(v) 
&= \int_{\lambda}^\infty \frac{1}{\lambda} \exp\left(-\frac{x_P-\lambda}{\lambda}\right) ~ q_{V|X_P}(v|x_P) ~ dx_P \nonumber \\
&\leq e ~ f_{V}(v)
\label{eq:QV_FV_ineq}
\end{align}
where
\begin{align}
q_{V|X_P}(v|x_P) = q_{V|X_A}(v|\sqrt{x_P})
\label{eq:Q_V|XP}
\end{align}
and
\begin{align}
f_V(v)
&\equiv \int_{0}^\infty \frac{1}{\lambda} \exp\left(-\frac{x_P}{\lambda}\right) \ q_{V|X_P}(v|x_P) dx_P.
\label{eq:F_V_def}
\end{align}
The inequality (\ref{eq:QV_FV_ineq}) follows from the non-negativity of the integrand.
By combining (\ref{eq:Q_V|XA}), (\ref{eq:Q_V|XP}), (\ref{eq:F_V_def}) and making the change of variables $x=x_P \Delta$, we have
\ifdefined\twocolumnmode{
\begin{align}
&f_V(v) \nonumber\\
&= \int_0^\infty \frac{e^{-{x}/(\lambda \Delta)}}{\lambda \Delta}
\frac{1}{\sqrt{4 \pi x \Delta \sigma^2_N}} \exp\left(- \frac{(v-x - \sigma^2_N)^2}{4 x \Delta \sigma^2_N} \right) dx \nonumber \\
&= \frac{1}{\sqrt{\lambda \Delta (\lambda \Delta + 4 \Delta \sigma^2_N)}} ~\times \nonumber\\& \quad
\exp\left( \frac{2}{4 \Delta \sigma^2_N} \left[ v-\sigma^2_N - |v-\sigma^2_N| \sqrt{1+\frac{4 \Delta \sigma^2_N}{\lambda \Delta}} \right] \right)
\label{eq:F_V}
\end{align}
}\else{
\begin{align}
f_V(v)
&= \int_0^\infty \frac{e^{-{x}/(\lambda \Delta)}}{\lambda \Delta}
\frac{1}{\sqrt{4 \pi x \Delta \sigma^2_N}} \exp\left(- \frac{(v-x - \sigma^2_N)^2}{4 x \Delta \sigma^2_N} \right) dx \nonumber \\
&= \frac{1}{\sqrt{\lambda \Delta (\lambda \Delta + 4 \Delta \sigma^2_N)}} ~
\exp\left( \frac{2}{4 \Delta \sigma^2_N} \left[ v-\sigma^2_N - |v-\sigma^2_N| \sqrt{1+\frac{4 \Delta \sigma^2_N}{\lambda \Delta}} \right] \right)
\label{eq:F_V}
\end{align}
}\fi
where we used equation (140) in Appendix A of \cite{Moser2012}:
\begin{align}
&\int_0^\infty \frac{1}{a} \exp\left(-\frac{x}{a}\right)
\frac{1}{\sqrt{\pi b x}} \exp\left(- \frac{(u-x)^2}{b x} \right) dx \nonumber\\&
= \frac{1}{\sqrt{a (a + b)}}
\exp\left( \frac{2}{b} \left[ u - |u| \sqrt{1+\frac{b}{a}} \right] \right).
\end{align}
Therefore, we have
\ifdefined\twocolumnmode{
\begin{align}
&\mathbb{E}[ -\log( f_V(V) ) ] \nonumber\\
&= \frac{1}{2} \log(\Delta^2 (\lambda^2 + 4 \lambda \sigma^2_N)) \nonumber \\&
- \frac{1}{2 \Delta \sigma^2_N} \left[ \mathbb{E}[V-\sigma^2_N] - \mathbb{E}[|V-\sigma^2_N|] \sqrt{1+\frac{4 \sigma^2_N}{\lambda}} \right] \nonumber \\
&\stackrel{(a)}{\geq} \log(\Delta \lambda) + 
\frac{1}{2 \sigma^2_N \Delta} \mathbb{E}[V-\sigma^2_N] \left[ \sqrt{1+\frac{4 \sigma^2_N}{\lambda}} - 1 \right] \nonumber \\
&\stackrel{(b)}{\geq} \log(\Delta \lambda)
\label{eq:log_F_V}
\end{align}
}\else{
\begin{align}
\mathbb{E}[ -\log( f_V(V) ) ]
&= \frac{1}{2} \log(\Delta^2 (\lambda^2 + 4 \lambda \sigma^2_N))
- \frac{1}{2 \Delta \sigma^2_N} \left[ \mathbb{E}[V-\sigma^2_N] - \mathbb{E}[|V-\sigma^2_N|] \sqrt{1+\frac{4 \sigma^2_N}{\lambda}} \right] \nonumber \\
&\stackrel{(a)}{\geq} \log(\Delta \lambda) + 
\frac{1}{2 \sigma^2_N \Delta} \mathbb{E}[V-\sigma^2_N] \left[ \sqrt{1+\frac{4 \sigma^2_N}{\lambda}} - 1 \right] \nonumber \\
&\stackrel{(b)}{\geq} \log(\Delta \lambda)
\label{eq:log_F_V}
\end{align}
}\fi
where
$(a)$ holds because the logarithmic function is monotonic and $\mathbb{E}[| \cdot |] \geq \mathbb{E}[ \cdot ]$, and
$(b)$ holds because
\begin{align}
&\mathbb{E}[V-\sigma^2_N] \nonumber \\
&= \mathbb{E}[X_A^2] \Delta \mathbb{E}[G] + 2 \mathbb{E}[X_A] \Delta \mathbb{E}[Z_1] + \mathbb{E}[Z_0] - \sigma^2_N \nonumber \\
&= P \Delta \mathbb{E}[G] \geq 0.
\end{align}
The monotonicity of the logarithmic function and (\ref{eq:QV_FV_ineq}) yield
\begin{align}
\mathbb{E}[ -\log( q_{V}(V) ) ] 
&\geq \mathbb{E}\left[ -\log\left( e ~ f_V(V) \right) \right] \nonumber \\
&\geq \log{\Delta} + \log{\lambda}
- 1 
\label{eq:log_Q_V}
\end{align}
where the last inequality follows from (\ref{eq:log_F_V}).
It follows from (\ref{eq:I_XA_V_LB_AuxCh}), (\ref{eq:log_Q_V|XA}) and (\ref{eq:log_Q_V}) that
\ifdefined\twocolumnmode{
\begin{align}
I(X_A;V)
&\geq 
\log{\lambda}
- 1
- \frac{1}{2} \log(4 \pi\sigma^2_N)
- \frac{1}{2} \mathbb{E}[ \log(X_A^2) ] \nonumber \\&
- \frac{P}{4 \sigma^2_N} \mathbb{E}\left[ (G-\tilde{G})^2 \right] - \frac{1}{2} - \frac{\sigma^2_N}{4 \Delta} \mathbb{E}\left[ \frac{1}{X_A^2} \right].
\label{eq:I_XA_V_LB}
\end{align}
}\else{
\begin{align}
I(X_A;V)
&\geq 
\log{\lambda}
- 1
- \frac{1}{2} \log(4 \pi\sigma^2_N)
- \frac{1}{2} \mathbb{E}[ \log(X_A^2) ] 
- \frac{P}{4 \sigma^2_N} \mathbb{E}\left[ (G-\tilde{G})^2 \right] - \frac{1}{2} - \frac{\sigma^2_N}{4 \Delta} \mathbb{E}\left[ \frac{1}{X_A^2} \right].
\label{eq:I_XA_V_LB}
\end{align}
}\fi

We have 
\begin{align}
\mathbb{E}\left[ \frac{1}{X_P} \right] 
\leq \frac{2}{P}
\end{align}
and
\begin{align}
\mathbb{E}\left[ \log\left( X_P \right) \right]
&= \int_{\lambda}^\infty \frac{1}{\lambda} e^{-(x-\lambda)/\lambda} \log(x) dx \nonumber \\
&\stackrel{(a)}{=} \log\lambda + \int_{1}^\infty e^{-(u-1)} \log(u) du \nonumber \\
&\stackrel{(b)}{\leq} \log\lambda + 1
\end{align}
where 
$(a)$ follows by the change of variables $u = x/\lambda$, and
$(b)$ holds because $\log(u) \leq u-1$ for all $u > 0$.
Substituting into (\ref{eq:I_XA_V_LB}), we obtain
\ifdefined\twocolumnmode{
\begin{align}
I(X_A;V) - \frac{1}{2} \log{\SNR}
&\geq - 2 - \frac{1}{2} \log(8 \pi) 
- \frac{1}{2 \SNR \Delta} \nonumber\\&
- \frac{1}{4} \SNR ~ \mathbb{E}\left[ (G-\tilde{G})^2 \right].
\label{eq:I_XA_V_LB_AuxCh_final}
\end{align}
}\else{
\begin{align}
I(X_A;V) - \frac{1}{2} \log{\SNR}
&\geq - 2 - \frac{1}{2} \log(8 \pi) 
- \frac{1}{2 \SNR \Delta} 
- \frac{1}{4} \SNR ~ \mathbb{E}\left[ (G-\tilde{G})^2 \right].
\label{eq:I_XA_V_LB_AuxCh_final}
\end{align}
}\fi

By combining (\ref{eq:I_XA_Y-def}), (\ref{eq:I_XA1n_Y1n_LB}), (\ref{eq:I_XAk_Vk}) and (\ref{eq:I_XA_V_LB_AuxCh_final}), we have
\ifdefined\twocolumnmode{
\begin{align}
I(X_A;Y) - \frac{1}{2} \log{\SNR}
&\geq - 2 - \frac{1}{2} \log(8 \pi) 
- \frac{1}{2 \SNR \Delta} \nonumber\\&
- \frac{1}{4} \SNR ~ \mathbb{E}\left[ (G-\tilde{G})^2 \right].
\label{eq:I_XA_Y_LB}
\end{align}
}\else{
\begin{align}
I(X_A;Y) - \frac{1}{2} \log{\SNR}
&\geq - 2 - \frac{1}{2} \log(8 \pi) 
- \frac{1}{2 \SNR \Delta} 
- \frac{1}{4} \SNR ~ \mathbb{E}\left[ (G-\tilde{G})^2 \right].
\label{eq:I_XA_Y_LB}
\end{align}
}\fi
Since (\ref{eq:I_XA_Y_LB}) is valid for any $\tilde{G} \in (0,1]$, 
we choose $\tilde{G}$ to maximize the right hand side of (\ref{eq:I_XA_Y_LB}).
The optimal $\tilde{G}$ is
\begin{align}
\arg\min_{\tilde{G}} \mathbb{E}\left[ (G-\tilde{G})^2 \right] = \mathbb{E}[G].
\end{align}
Since $L = \lceil \sqrt[3]{\beta^2 \SNR} \rceil$ and $\Delta = 1/L$, it follows that
\begin{align}
\lim_{\SNR \rightarrow \infty} {\SNR \Delta} = \infty
\text{ and }
\lim_{\SNR \rightarrow \infty} \SNR \Delta^3 = \frac{1}{\beta^2}.
\end{align}
Moreover, we have (see Appendix \ref{sec:amplitude-fading-moments})
\ifdefined\twocolumnmode{
\begin{align}
\lim_{\SNR \rightarrow \infty} \SNR ~ \textsf{Var}(G) &
= \lim_{\SNR \rightarrow \infty} \SNR \Delta^3 
  \cdot
  \lim_{\Delta \rightarrow 0} \frac{\textsf{Var}(G)}{\Delta^3} \nonumber\\&
= \frac{1}{\beta^2} \cdot \frac{4(\pi\beta)^2}{45}
= \frac{4\pi^2}{45}.
\end{align}
}\else{
\begin{align}
\lim_{\SNR \rightarrow \infty} \SNR ~ \textsf{Var}(G)
= \lim_{\SNR \rightarrow \infty} \SNR \Delta^3 
  \cdot
  \lim_{\Delta \rightarrow 0} \frac{\textsf{Var}(G)}{\Delta^3}
= \frac{1}{\beta^2} \cdot \frac{4(\pi\beta)^2}{45}
= \frac{4\pi^2}{45}.
\end{align}
}\fi
Hence, for $\tilde{G}=\mathbb{E}[G]$, we have
\begin{align}
\lim_{\SNR \rightarrow \infty} I(X_A;Y) - \frac{1}{2} \log{\SNR}
&\geq - 2  - \frac{1}{2} \log(8 \pi)
- \frac{\pi^2}{45}.
\label{eq:I_XA_Y_LB_limit_cubicroot}
\end{align}

\begin{remark}
\label{remark:amplitude-intuition}
We motivate the auxiliary channel in (\ref{eq:Q_V|XA}).
We first express the auxiliary channel as $\tilde{V}$:
\begin{align}
\tilde{V}
&= X_A^2 \Delta \tilde{G} + 2 X_A \Delta \tilde{Z}_1 + \tilde{Z}_0
\label{eq:V_auxiliary}
\end{align}
where $\tilde{G}$, $\tilde{Z}_1$ and $\tilde{Z}_0$ are defined as
\begin{align}
\tilde{G} &\equiv \mathbb{E}[G] 
\label{eq:Gtilde_def} \\
\tilde{Z}_1 &\equiv \sum_{\ell=1}^L \Re[ e^{j \Phi_X} e^{j \Theta_\ell} N_\ell^* ]
\label{eq:Z1tilde_def} \\
\tilde{Z}_0 &\equiv \sigma_N^2.
\label{eq:Z0tilde_def} 
\end{align}
We make this choice for $\tilde{V}$ because
$Z_1$ and $Z_0$ converge (in mean square) to $\tilde{Z}_1$ and $\tilde{Z}_0$, respectively, as $L \rightarrow \infty$.
Moreover, $\tilde{Z}_1$ is Gaussian and $\tilde{Z}_0$ is a constant, 
which are easier quantities to handle than $Z_1$ and $Z_0$.
\end{remark}

\section{Phase Modulation}
\label{sec:approxmodel-phase}
This Appendix provides a proof of Theorem \ref{theorem:oversamp-approxmodel}.
We assume that $\Ts=1$ and $\sigma_N^2=1$ for simplicity.
By using the chain rule, we have
\begin{align}
I(\Phi_{X}^n;\Psiv^n|X_{A}^n)
&= \sum_{k=1}^n I(\Phi_{X,k};\Psiv^n|X_{A}^n,\Phi_{X}^{k-1}) \nonumber \\
&\stackrel{(a)}{\geq} \sum_{k=2}^n I(\Phi_{X,k};\Psiv^n|X_{A}^n,\Phi_{X}^{k-1}) \nonumber \\
&\stackrel{(b)}{\geq} \sum_{k=2}^n I(\Phi_{X,k};\tilde{\Psi}_k |X_{A}^n,\Phi_{X}^{k-1})
\label{eq:I_PhiX1n_Y1n|XA1n_LB1}
\end{align}
where
$\tilde{\Psi}_k$ is a deterministic function of $(\Psiv^n,X_{A}^n,\Phi_{X}^{k-1})$.
Inequality $(a)$ follows from the non-negativity of mutual information
and $(b)$ follows from the data processing inequality.
We choose
\begin{align}
\tilde{\Psi}_k
= \frac{\Psi_{(k-1)L+1}}{\sqrt{\Delta}} \left( \frac{\Psi_{(k-1)L}}{X_{k-1} \Delta}  \right)^*.
\label{eq:Yk_tilde_def}
\end{align}
The intuition behind the choice of (\ref{eq:Yk_tilde_def}) is outlined in Remark \ref{remark:phase-intuition} below.
We express $\tilde{\Psi}_k$ as
\begin{align}
\tilde{\Psi}_k
= \bigg( |X_k| \sqrt{\Delta} e^{j\Phi_{X,k}} + \tilde{N}_k \bigg) \left( 1 + \tilde{Z}_{k-1} \right) e^{j W_{(k-1)L+1}}
\label{eq:Yk_tilde}
\end{align}
where
\begin{align}
\tilde{Z}_k \equiv \frac{N_{kL}^* \ e^{-j \Theta_{kL}} }{X_k^* \Delta}
\label{eq:Zk_tilde_def}
\end{align}
and
\begin{align}
\tilde{N}_{k} \equiv \frac{N_{(k-1)L+1} \ e^{-j \Theta_{(k-1)L+1}} }{\sqrt{\Delta}}.
\label{eq:Nk_tilde_def}
\end{align}
The processes $\tilde{N}_k$ and $\tilde{Z}_{k-1}$ are statistically independent and
\begin{align}
&\tilde{N}_k \sim \mathcal{N}_{\mathbb{C}}(0,1) \\
&\tilde{Z}_{k-1} \Big| \{|X_{k-1}|=|x_{k-1}|\} \sim \mathcal{N}_{\mathbb{C}}\left(0,\frac{1}{|x_{k-1}|^2 \Delta}\right).
\label{eq:Z_given_X}
\end{align}
The notation (\ref{eq:Z_given_X}) means that,
conditioned on $\{|X_{k-1}|=|x_{k-1}|\}$, $\tilde{Z}_{k-1}$ is a Gaussian random variable 
with mean $0$ and variance $1/(|x_{k-1}|^2 \Delta)$.
Moreover, $W_{(k-1)L+1}$ is statistically independent of $\tilde{N}_k$ and $\tilde{Z}_{k-1}$.
The choice of $\tilde{\Psi}_k$ in (\ref{eq:Yk_tilde_def}) implies that
\begin{align}
I(\Phi_{X,k};\tilde{\Psi}_k |X_{A}^n,X^{k-1})
&= I(\Phi_{X,k};\tilde{\Psi}_k |X_{A,k},X_{k-1}).
\label{eq:I_PhiX_Ytilde_LEBOWSKI}
\end{align}
Define $\tilde{\Phi}_{\Psi,k} \equiv \arg({\tilde{\Psi}_k})$ and
\begin{align}
&q_{\tilde{\Phi}_{\Psi}|\Phi_{X}} \left(\phi_y \big| \phi_x \right) 
\equiv \frac{\exp(\alpha \cos(\phi_y-\phi_x) )}{2 \pi I_0(\alpha)}
\label{eq:q_PhiY|PhiX_new}
\end{align}
where $I_0(\cdot)$ is the zeroth-order modified Bessel function of the first kind and $\alpha > 0$.
This distribution is known as Tikhonov (or von Mises) distribution \cite{Mardia2000}.
Furthermore, define
\begin{align}
&q_{\tilde{\Phi}_{\Psi,k}|X_{A,k},X_{k-1}}\left(\phi_y \big| |x_k|,x_{k-1} \right) \nonumber\\&
\equiv \int_{-\pi}^{\pi}
p_{\Phi_{X,k}|X_{A,k},X_{k-1}} \left(\phi_x \big| |x_k|,x_{k-1} \right) 
q_{\tilde{\Phi}_{\Psi}|\Phi_{X}} (\phi_y | \phi_x)
d\phi_x \nonumber\\&
= \frac{1}{2 \pi}.
\label{eq:q_PhiY_new}
\end{align}
The last equality holds because $X_1,\ldots,X_n$ are statistically independent
and $\Phi_{X,k}$ is independent of $X_{A,k}$ with a uniform distribution on $[-\pi,\pi)$.
We have
\ifdefined\twocolumnmode{
\begin{align}
&I(\Phi_{X,k};\tilde{\Psi}_k |X_{A,k},X_{k-1}) \nonumber\\
&\stackrel{(a)}{\geq} I(\Phi_{X,k};\tilde{\Phi}_{\Psi,k} |X_{A,k},X_{k-1}) \nonumber\\
&\stackrel{(b)}{\geq} 
\mathbb{E}\left[\log q_{\tilde{\Phi}_{\Psi}|\Phi_{X}} (\tilde{\Phi}_{\Psi,k}|\Phi_{X,k}) \right] \nonumber\\& - 
\mathbb{E}\left[\log q_{\tilde{\Phi}_{\Psi,k}|X_{A,k},X_{k-1}} \Big(\tilde{\Phi}_{\Psi,k} \big| |X_k|,X_{k-1} \Big) \right] \nonumber\\
&\stackrel{(c)}{=} \log(2\pi) - \log( 2 \pi I_0(\alpha) ) + \alpha \mathbb{E}\left[ \cos(\tilde{\Phi}_{\Psi,k}-\Phi_{X,k}) \right] \nonumber\\
&= - \log( I_0(\alpha) ) + \alpha \mathbb{E}\left[ \cos(\tilde{\Phi}_{\Psi,k}-\Phi_{X,k}) \right] \nonumber\\
&\stackrel{(d)}{\geq} \frac{1}{2} \log\alpha - \alpha 
    + \alpha \mathbb{E}\left[ \cos(\tilde{\Phi}_{\Psi,k}-\Phi_{X,k}) \right] \nonumber\\
&\geq \frac{1}{2} \log{\alpha} - \alpha \frac{\sigma^2_{W}}{2} - \frac{4 \alpha}{\SNR \Delta}
\label{eq:I_PhiX_Ytilde_LB_LEBOWSKI}
\end{align}
}\else{
\begin{align}
&I(\Phi_{X,k};\tilde{\Psi}_k |X_{A,k},X_{k-1}) \nonumber\\
&\stackrel{(a)}{\geq} I(\Phi_{X,k};\tilde{\Phi}_{\Psi,k} |X_{A,k},X_{k-1}) \nonumber\\
&\stackrel{(b)}{\geq} 
\mathbb{E}\left[\log q_{\tilde{\Phi}_{\Psi}|\Phi_{X}} (\tilde{\Phi}_{\Psi,k}|\Phi_{X,k}) \right] - 
\mathbb{E}\left[\log q_{\tilde{\Phi}_{\Psi,k}|X_{A,k},X_{k-1}} \Big(\tilde{\Phi}_{\Psi,k} \big| |X_k|,X_{k-1} \Big) \right] \nonumber\\
&\stackrel{(c)}{=} \log(2\pi) - \log( 2 \pi I_0(\alpha) ) + \alpha \mathbb{E}\left[ \cos(\tilde{\Phi}_{\Psi,k}-\Phi_{X,k}) \right] \nonumber\\
&= - \log( I_0(\alpha) ) + \alpha \mathbb{E}\left[ \cos(\tilde{\Phi}_{\Psi,k}-\Phi_{X,k}) \right] \nonumber\\
&\stackrel{(d)}{\geq} \frac{1}{2} \log\alpha - \alpha 
    + \alpha \mathbb{E}\left[ \cos(\tilde{\Phi}_{\Psi,k}-\Phi_{X,k}) \right] \nonumber\\
&\geq \frac{1}{2} \log{\alpha} - \alpha \frac{\sigma^2_{W}}{2} - \frac{4 \alpha}{\SNR \Delta}
\label{eq:I_PhiX_Ytilde_LB_LEBOWSKI}
\end{align}
}\fi
where
$(a)$ follows from the data processing inequality,
$(b)$ follows by extending the result of the auxiliary-channel lower bound (cf. (\ref{eq:I_AuxCh})),
$(c)$ follows from (\ref{eq:q_PhiY|PhiX_new}) and (\ref{eq:q_PhiY_new}),
$(d)$ follows from
\cite[Lemma 2]{GhozlanISIT2010}
\begin{align}
I_0(z) \leq \frac{\sqrt{\pi}}{2} \frac{e^z}{\sqrt{z}} \leq \frac{e^z}{\sqrt{z}}
\label{eq:BesselI0_UB}
\end{align}
and the last inequality holds because
\begin{align}
\mathbb{E}\left[\cos(\tilde{\Phi}_{\Psi,k}-\Phi_{X,k})\right] 
\geq 1 - \frac{\sigma^2_{W}}{2} - \frac{4}{\SNR \Delta}
\end{align}
for $\SNR \Delta > 2$, 
as we will show.
Define
\begin{align}
S_k 
&\equiv 1 + \tilde{Z}_k
\label{eq:Sk_def} \\
\hat{\Psi}_k
&\equiv |X_k| \sqrt{\Delta} e^{j\Phi_{X,k}} + \tilde{N}_k.
\label{eq:Yk_hat_def}
\end{align}
Therefore, we have
\begin{align}
&  \mathbb{E}\left[\cos(\tilde{\Phi}_{\Psi,k}-\Phi_{X,k})\right] \nonumber\\
&\stackrel{(a)}{=} \mathbb{E}\left[\cos(W_{(k-1)L+1}+\Phi_{S,k-1}+\hat{\Phi}_{\Psi,k}-\Phi_{X,k}) \right] \nonumber\\
&\stackrel{(b)}{=} \mathbb{E}\left[\cos(W_{(k-1)L+1}+\Phi_{S,k-1})\right] \mathbb{E}\left[\cos(\hat{\Phi}_{\Psi,k}-\Phi_{X,k}) \right] \nonumber\\&
 - \mathbb{E}\left[\sin(W_{(k-1)L+1}+\Phi_{S,k-1})\right] \mathbb{E}\left[\sin(\hat{\Phi}_{\Psi,k}-\Phi_{X,k}) \right] 
\label{eq:Ecos_PhiYtilde_minus_PhiX}
\end{align}
where $\Phi_{S,k} \equiv \arg({S_k})$ 
and $\hat{\Phi}_{\Psi,k} \equiv \arg({\hat{\Psi}_k})$. 
Step
$(a)$ follows from (\ref{eq:Yk_tilde})
while step
$(b)$ follows from the trigonometric relation 
$\cos(A+B) = \cos(A) \cos(B) - \sin(A) \sin(B)$
and because $W_{(k-1)L+1}+\Phi_{S,k-1}$ is independent of $\hat{\Phi}_{\Psi,k}$ and $\Phi_{X,k}$.

By using Lemma \ref{lemma:awgn-lemma-trig} in Appendix \ref{sec:awgn-phase}, we have
\begin{align}
\mathbb{E}\left[ \cos(\hat{\Phi}_{\Psi,k}-\Phi_{X,k}) \bigg| X_{A,k}=|x_k| \right] 
&\geq 1 - \frac{1}{|x_k|^2 \Delta} 
\label{eq:Ecos_PhiYhat_minus_PhiX} \\
\mathbb{E}\left[ \sin(\hat{\Phi}_{\Psi,k}-\Phi_{X,k}) \bigg| X_{A,k}=|x_k| \right]
&=0.
\label{eq:Esin_PhiYhat_minus_PhiX}
\end{align}
Equations (\ref{eq:Ecos_PhiYtilde_minus_PhiX}) and (\ref{eq:Esin_PhiYhat_minus_PhiX}) imply that 
we do not need to compute $\mathbb{E}\left[\sin(W_{(k-1)L+1}+\Phi_{S,k-1})\right]$.
We have
\ifdefined\twocolumnmode{
\begin{align}
&  \mathbb{E}\left[\cos(W_{(k-1)L+1}+\Phi_{S,k-1})\right] \nonumber\\
&= \mathbb{E}\left[\cos(W_{(k-1)L+1})\right] \mathbb{E}\left[\cos(\Phi_{S,k-1})\right] \nonumber\\&
 - \mathbb{E}\left[\sin(W_{(k-1)L+1})\right] \mathbb{E}\left[\sin(\Phi_{S,k-1})\right]
\label{eq:Ecos_W_plus_PhiS}
\end{align}
}\else{
\begin{align}
&  \mathbb{E}\left[\cos(W_{(k-1)L+1}+\Phi_{S,k-1})\right] \nonumber\\
&= \mathbb{E}\left[\cos(W_{(k-1)L+1})\right] \mathbb{E}\left[\cos(\Phi_{S,k-1})\right] 
 - \mathbb{E}\left[\sin(W_{(k-1)L+1})\right] \mathbb{E}\left[\sin(\Phi_{S,k-1})\right]
\label{eq:Ecos_W_plus_PhiS}
\end{align}
}\fi
because $W_{(k-1)L+1}$ and $\Phi_{S,k-1}$ are independent.
Since $W_{(k-1)L+1}$ is Gaussian with mean 0 and variance $\sigma^2_W = 2\pi\beta \Delta$,
the characteristic function of $W_{(k-1)L+1}$ is
\begin{align}
\mathbb{E}\left[ e^{j W_{(k-1)L+1}} \right]
= e^{- \sigma^2_{W}/2}
\label{eq:E_expjW}
\end{align}
and by using the linearity of expectation, we have
\begin{align}
&\mathbb{E}\left[ \cos(W_{(k-1)L+1}) \right]
= \Re\left[ \mathbb{E}\left[ e^{j W_{(k-1)L+1}} \right] \right]
= e^{- \sigma^2_{W}/2} 
\label{eq:Ecos_W} \\
&\mathbb{E}\left[ \sin(W_{(k-1)L+1}) \right]
= \Im\left[ \mathbb{E}\left[ e^{j W_{(k-1)L+1}} \right] \right]
= 0
\label{eq:Esin_W}
\end{align}
where $\Re[\cdot]$ and $\Im[\cdot]$ denote the real and imaginary parts, respectively.
By using Lemma \ref{lemma:awgn-lemma-trig} in Appendix \ref{sec:awgn-phase} again, we have
\begin{align}
\mathbb{E}\left[\cos(\Phi_{S,k-1}) \bigg| X_{k-1}=x_{k-1} \right]
&\geq 1 - \frac{1}{|x_{k-1}|^2 \Delta}.
\label{eq:Ecos_PhiS}
\end{align}
It follows from (\ref{eq:Ecos_W_plus_PhiS}), (\ref{eq:Ecos_W}), (\ref{eq:Esin_W}) and (\ref{eq:Ecos_PhiS}) that
\ifdefined\twocolumnmode{
\begin{align}
&\mathbb{E}\left[\cos(W_{(k-1)L+1}+\Phi_{S,k-1})\right] \nonumber\\&
\geq e^{- \sigma^2_{W}/2} \left(1 - \mathbb{E}\left[\frac{1}{|X_{k-1}|^2}\right] \frac{1}{\Delta} \right).
\label{eq:Ecos_W_plus_PhiS_LB}
\end{align}
}\else{
\begin{align}
\mathbb{E}\left[\cos(W_{(k-1)L+1}+\Phi_{S,k-1})\right] 
\geq e^{- \sigma^2_{W}/2} \left(1 - \mathbb{E}\left[\frac{1}{|X_{k-1}|^2}\right] \frac{1}{\Delta} \right).
\label{eq:Ecos_W_plus_PhiS_LB}
\end{align}
}\fi
By combining (\ref{eq:Ecos_PhiYtilde_minus_PhiX}), (\ref{eq:Ecos_PhiYhat_minus_PhiX}), (\ref{eq:Esin_PhiYhat_minus_PhiX})
(\ref{eq:Ecos_W_plus_PhiS_LB}), we have
(for $P\Delta > 2$)
\begin{align}
&\mathbb{E}\left[\cos(\tilde{\Phi}_{\Psi,k}-\Phi_{X,k})\right] \nonumber\\
&\geq e^{- \sigma^2_{W}/2} 
      \left(1 - \mathbb{E}\left[\frac{1}{|X_{k-1}|^2}\right] \frac{1}{\Delta} \right) 
			\left(1 - \mathbb{E}\left[\frac{1}{|X_k|^2    }\right] \frac{1}{\Delta} \right) \nonumber \\
&\stackrel{(a)}{\geq} e^{- \sigma^2_{W}/2} \left(1 - \frac{2}{P \Delta}\right)^2 \nonumber \\
&\stackrel{(b)}{\geq} e^{- \sigma^2_{W}/2} - \frac{4}{P \Delta} \nonumber \\
&\stackrel{(c)}{\geq} 1 - \frac{\sigma^2_{W}}{2} - \frac{4}{P \Delta}
\label{eq:Ecos_PhiYtilde_minus_PhiX_LB}
\end{align}
where 
$(a)$ holds because $|X_k|^2 \geq P/2$ and $|X_{k-1}|^2 \geq P/2$,
while
$(b)$ follows from $e^{- \sigma^2_{W}/2} \leq 1$ and the non-negativity of $4/(P\Delta)^2$ and
$(c)$ follows from $e^{-x} \geq 1 - x$.

It follows from (\ref{eq:I_PhiX1n_Y1n|XA1n_LB1}), (\ref{eq:I_PhiX_Ytilde_LEBOWSKI}) and (\ref{eq:I_PhiX_Ytilde_LB_LEBOWSKI}) that
\begin{align}
\frac{1}{\nsymb} I(\Phi_{X}^\nsymb;\Psiv^\nsymb|X_{A}^\nsymb)
&\geq \frac{\nsymb-1}{\nsymb}
\left[ \frac{1}{2} \log{\alpha} - \alpha \pi\beta\Delta - \frac{4 \alpha}{\SNR \Delta} \right].
\label{eq:I_PhiX1n_Y1n|XA1n_LB2}
\end{align}
Hence, we have
\begin{align}
I(\Phi_{X};\Psi|X_{A})
&= \lim_{\nsymb \rightarrow \infty} \frac{1}{\nsymb} I(\Phi_{X}^\nsymb;\Psiv^\nsymb|X_{A}^\nsymb) \nonumber \\
&\geq \frac{1}{2} \log{\alpha} - \alpha \pi\beta\Delta - \frac{4 \alpha}{\SNR \Delta}.
\label{eq:I_PhiX_Y|XA_LB_LEBOWSKI}
\end{align}
Since the lower bound (\ref{eq:I_PhiX_Y|XA_LB_LEBOWSKI}) is valid for any $\alpha>0$,
we have
\begin{align}
I(\Phi_{X};\Psi|X_{A})
&\geq \max_{\alpha>0} \frac{1}{2} \log{\alpha} - \alpha \left( \pi \beta \Delta + \frac{4}{\SNR \Delta} \right) \nonumber \\
&= - \frac{1}{2} \log\left(2 \pi \beta \Delta + \frac{8}{\SNR \Delta}\right) - \frac{1}{2} \nonumber\\
&= - \frac{1}{2} \log\left(\frac{2 \pi \beta}{L} + \frac{8 L}{\SNR}\right) - \frac{1}{2}
\label{eq:I_PhiX_Y|XA_LB_best_alpha}
\end{align}
where we used $\Delta = 1/L$ in the last equation.
We maximize the lower bound (\ref{eq:I_PhiX_Y|XA_LB_best_alpha}) over $L$ to 
obtain \footnote{Ignoring $L \in \mathbb{Z}$.}  
\begin{align}
L = \sqrt{\pi \beta \SNR}/2.
\end{align}
By choosing an oversampling factor
$L = \lceil \sqrt{\pi \beta \SNR}/2 \rceil$,
it follows that
\begin{align}
\lim_{\SNR \rightarrow \infty} I(\Phi_{X};\Psi|X_{A}) - \frac{1}{4} \log{\SNR}
\geq - \frac{1}{4} \log(64 \pi \beta) - \frac{1}{2}.
\label{eq:inforate-phase-contrib-best_alpha}
\end{align}


The proof of (\ref{eq:I_XA_Y_approxmodel_theorem}) is similar to the proof in 
Appendix \ref{sec:amplitude-lower-bound} for the full model and is omitted.\footnote{
The inequality (\ref{eq:I_XA_Y_LB}) is valid for the approximate model except that $G$ is set to $1$ because $F_k=1$ in the approximate model. 
Hence, by setting $\tilde{G}=1$, the term $-\pi^2/45$ in (\ref{eq:I_XA_Y_fullmodel_theorem}) does not appear in (\ref{eq:I_XA_Y_approxmodel_theorem}).}

\begin{remark}
\label{remark:phase-intuition}
We outline the intuition for choosing $\tilde{\Psi}_k$ in (\ref{eq:Yk_tilde_def}).
The basic idea is to process $(\Psiv^n,X_{A}^n,\Phi_{X}^{k-1})$ in three steps:
first estimate the phase noise, then cancel it, then decode.
We outline the motivation for each step:
\begin{enumerate}[wide, labelwidth=!, labelindent=0pt, itemsep=1pt]
\item 
Since the past inputs $X^{k-1}$ and outputs $\Y^{k-1}$ are available, one may use the simple estimator
\begin{align}
e^{j \widehat{\Theta}_{k^\prime}}
\equiv \frac{\Psi_{k^\prime}}{X_{\lceil k^\prime / L \rceil } \Delta} 
\label{eq:Theta_hat}
\end{align}
to estimate $\Theta_{k^\prime}$ for $k^\prime = 1,\ldots,(k-1)L$.
This is a reasonable estimate as long as the energy per sample is large relative to the noise variance per sample,
i.e., when $\SNR \Delta \gg 1$.

\item
Next, we use the previous estimates to cancel from $\Y_k^n$ the phase noise that accumulated up to sample $(k-1)L$.
More precisely, we use only the most recent estimate $\widehat{\Theta}_{(k-1)L}$.
We make this choice because
the discrete-time Wiener process $\{\Theta_k\}$ is a first-order Markov process
so, conditioned on $\Theta_{kL}$, the random variables $\Theta_{kL+1}$,...,$\Theta_{nL}$ are independent of $\Theta_{0}$,...,$\Theta_{kL-1}$.
This means that if the SNR is high enough so that
$\widehat{\Theta}_{k^\prime}$ is an accurate estimate of $\Theta_{k^\prime}$ for $k^\prime = 1$,$\ldots$, $(k-1)L$,
then it would suffice to use only $\widehat{\Theta}_{(k-1)L}$.

\item
Since the input is i.i.d. and there is no inter-symbol interference,
we discard $\Psi_{k+1}^n$ and use only $\Psiv_k$ (after canceling the phase noise) for decoding $\Phi_{X,k}$.
Moreover, we use only the first sample $\Psi_{(k-1)L+1}$ in $\Psiv_k$.
This is because, in absence of additive noise, we have
\begin{align*}
&I(\Phi_{X,k};\Psiv_k) \nonumber\\
&=I(\Phi_{X,k};\{\Phi_{\Psi,(k-1)L+1}:1 \leq \ell \leq L\}) \nonumber\\
&=I(\Phi_{X,k};\Phi_{\Psi,(k-1)L+1},\{W_{(k-1)L+\ell}:2 \leq \ell \leq L\}) \nonumber\\
&=I(\Phi_{X,k};\Phi_{\Psi,(k-1)L+1}) \nonumber\\
&=I(\Phi_{X,k};\Psi_{(k-1)L+1}).
\end{align*}

\end{enumerate}
\end{remark}

\section{Supporting Lemma for Appendix \ref{sec:amplitude-lower-bound}}
\label{sec:amplitude-fading-moments}
The following lemma is used in Appendix \ref{sec:amplitude-lower-bound}.

\begin{lemma} \label{lemma:limit_VarG}
For $G$ in (\ref{eq:G_def}) with $\Ts=1$, we have
\begin{align}
 \lim_{\Delta \rightarrow 0} \frac{\textsf{Var}(G)}{\Delta^3}
 = \frac{4}{45} (\pi \beta)^2.
\label{eq:limits_VarG}
\end{align}
\end{lemma}
\begin{proof}
Using the definition of $G$ in (\ref{eq:G_def}), we have
\begin{align}
\textsf{Var}(G) 
= \textsf{Var}\left( \frac{1}{L} \sum_{\ell=1}^L |F_\ell|^2 \right)
= \frac{1}{L} \textsf{Var}(|F_1|^2)
\label{eq:VarG}
\end{align}
because $\{F_k\}$ is an i.i.d. process.
The independence follows because
increments of a Wiener process over disjoint time intervals are independent
and because functions of independent random variables are also independent.
The identical distribution property follows from the rectangular pulse shape\footnote{
If the pulse is not rectangular, the process $\{F_k\}$ may not be identically distributed 
because the phase noise $e^{j\Theta(t)}$ is weighted in a given sample interval according to the pulse shape.}
and that the integrations are over increments of a Wiener process with equal length.
By using the formula in \cite[Theorem 1]{ShamaiIT94} 
for computing moments of filtered Wiener phase noise 
(specifically $\mathbb{E}[|F_1|^{2k}]$ where $k$ is a positive integer),
we have
\begin{align}
\mathbb{E}[ |F_1|^2 ]
= 2\frac{a-1-\log{a}}{(\log{a})^2}
\label{eq:EF2}
\end{align}
and
\ifdefined\twocolumnmode{
\begin{align}
&\mathbb{E}[ |F_1|^4 ] \nonumber \\&
= \frac{783 - 784 a + a^4 + (540 + 240 a) \log{a} + 144 (\log{a})^2}{18 (\log{a})^4}.
\label{eq:EF4}
\end{align}
}\else{
\begin{align}
\mathbb{E}[ |F_1|^4 ] 
= \frac{783 - 784 a + a^4 + 540 \log{a} + 240 a \log{a} + 144 (\log{a})^2}{18 (\log{a})^4}
\label{eq:EF4}
\end{align}
}\fi
where
\begin{align}
a \equiv e^{-\pi \beta \Delta}.
\label{eq:a_def}
\end{align}
It follows from (\ref{eq:EF4}) and (\ref{eq:EF2}) that
\ifdefined\twocolumnmode{
\begin{align}
&\textsf{Var}(|F_1|^2) \nonumber\\
&= \mathbb{E}[|F_1|^4] - (\mathbb{E}[|F_1|^2])^2  \nonumber\\
&= \frac{711 - 640 a - 72 a^2 + a^4 + 12 (33 + 32 a) \log{a} + 72 (\log{a})^2}{18 (\log{a})^4}
\label{eq:VarF2}
\end{align}
}\else{
\begin{align}
\textsf{Var}(|F_1|^2)
&= \mathbb{E}[|F_1|^4] - (\mathbb{E}[|F_1|^2])^2  \nonumber\\
&= \frac{711 - 640 a - 72 a^2 + a^4 + 396 \log{a} + 384 a \log{a} + 72 (\log{a})^2}{18 (\log{a})^4}
\label{eq:VarF2}
\end{align}
}\fi
which implies that
\begin{align}
 \lim_{a \rightarrow 1} \frac{\textsf{Var}(|F_1|^2)}{(\log{a})^2}
 = \frac{4}{45}.
\label{eq:limits_VarF2}
\end{align}
Finally, (\ref{eq:limits_VarG}) follows from 
(\ref{eq:limits_VarF2}), (\ref{eq:a_def}) and $L=1/\Delta$ (because $\Ts=1$).
\end{proof}

\section{Supporting Lemmas for Appendix \ref{sec:approxmodel-phase}}
\label{sec:awgn-phase}
This appendix contains two lemmas.
The main lemma is Lemma \ref{lemma:awgn-lemma-trig} which is used in Appendix \ref{sec:approxmodel-phase}.
Lemma \ref{lemma:lowerbounds} is a supporting lemma for Lemma \ref{lemma:awgn-lemma-trig}.

\begin{lemma} \label{lemma:awgn-lemma-trig}
Let $R$ be a fixed positive real number and let $\Phi \equiv \arg(Y)$ where
\begin{align}
Y = R + Z
\label{eq:ring_Y_def}
\end{align}
and $Z$ is a circularly-symmetric complex Gaussian random variable with mean $0$ and variance $1$.
Then, we have
\begin{align}
\mathbb{E}\left[ \cos(\Phi) \right] &\geq 1 - \frac{1}{R^2}
\label{eq:Ecos} \\
\mathbb{E}\left[ \sin(\Phi) \right] &= 0.
\label{eq:Esin}
\end{align}
\end{lemma}

\begin{proof}
The probability density function of $\Phi$ is \cite{Aldis1993}
\begin{align}
p_{\Phi}(\phi) 
&= \frac{1}{2 \pi} e^{-\R^2}
+ \frac{1}{\sqrt{4 \pi}} \R \cos(\phi) \ e^{-\R^2 \sin^2\phi}
\erfc\left(- \R \cos\phi \right)
\label{eq:p_Phi}
\end{align}
where $\erfc(\cdot)$ is the complementary error function
\begin{align}
\erfc(z) \equiv \frac{2}{\sqrt{\pi}} \int_z^\infty e^{-t^2} dt.
\label{eq:erfc_def}
\end{align}
Therefore, we have
\begin{align}
&\mathbb{E}\left[ \cos(\Phi) \right] \nonumber\\
&\equiv \int_{-\pi}^{\pi} \cos(\phi) \ p_{\Phi}(\phi) \ d\phi \nonumber\\
&\stackrel{(a)}{=} 2 \int_{0}^{\pi} \cos(\phi) \ p_{\Phi}(\phi) \ d\phi \nonumber\\
&\stackrel{(b)}{=} \int_{0}^{\pi} \cos(\phi) \left[
\frac{1}{\sqrt{\pi}} \R \cos(\phi) \ e^{-\R^2 \sin^2\phi}
\erfc\left(- \R \cos\phi \right)
\right] d\phi \nonumber\\
&\stackrel{(c)}{\geq} \int_{0}^{\pi/2} 
\frac{\R}{\sqrt{\pi}} \cos^2(\phi) \ e^{-\R^2 \sin^2\phi}
\erfc\left(- \R \cos\phi \right) d\phi \nonumber\\
&\stackrel{(d)}{\geq} \int_{0}^{\pi/2} 
\frac{\R}{\sqrt{\pi}} \cos^2(\phi) \ e^{-\R^2 \sin^2\phi}
\left(2 - e^{-\R^2 \cos^2\phi} \right) d\phi \nonumber\\
&= \int_{0}^{\pi/2} \frac{2 \R}{\sqrt{\pi}} \cos^2(\phi) \ e^{-\R^2 \sin^2\phi} d\phi
- \frac{\R}{\sqrt{\pi}} e^{-\R^2} \int_{0}^{\pi/2} \cos^2\phi d\phi \nonumber\\
&\stackrel{(e)}{=} \int_{0}^{\pi/2} \frac{2 \R}{\sqrt{\pi}} \cos^2(\phi) \ e^{-\R^2 \sin^2\phi} d\phi
 - \frac{\sqrt{\pi}}{4} \ \R e^{-\R^2}
\label{eq:Ecos_Phi_LB1}
\end{align}
where
$(a)$ follows because both $p_{\Phi}(\cdot)$ and $\cos(\phi)$ are even functions,
$(b)$ follows from (\ref{eq:p_Phi}) and $\int_0^\pi \cos\phi ~ d\phi = 0$,
$(c)$ follows because the integrand is non-negative over the interval $\phi \in (\pi/2,\pi]$,
$(d)$ holds because $\erfc\left(- \R \cos\phi \right) \geq 2 - e^{-\R^2 \cos^2\phi}$ 
(see Lemma \ref{lemma:lowerbounds}) 
and $\cos^2(\phi) \ e^{-\R^2 \sin^2\phi} \geq 0$
and finally 
$(e)$ follows by direct integration.

We bound the integral in (\ref{eq:Ecos_Phi_LB1}) as follows
\begin{align}
&\int_{0}^{\pi/2} \frac{2 \R}{\sqrt{\pi}} \cos^2\phi \ e^{-\R^2 \sin^2\phi} d\phi \nonumber\\
&\stackrel{(a)}{\geq} \frac{2 \R}{\sqrt{\pi}} \int_{0}^{\pi/2} (1-\phi^2) \ e^{-\R^2 \phi^2} d\phi \nonumber\\
&= \frac{2 \R}{\sqrt{\pi}} \int_{0}^{\pi/2} e^{-\R^2 \phi^2} d\phi 
 - \frac{2 \R}{\sqrt{\pi}} \int_{0}^{\pi/2} \phi^2 \ e^{-\R^2 \phi^2} d\phi \nonumber\\
&\stackrel{(b)}{=} \erf\left( \frac{\pi}{2} R \right)
- \frac{1}{2 R^2} \erf\left( \frac{\pi}{2} R \right) + \frac{1}{2\sqrt{\pi} R} e^{-\pi^2 R^2/4} \nonumber\\
&\stackrel{(c)}{\geq} \erf\left( \frac{\pi}{2} R \right)
- \frac{1}{2 R^2}
\end{align}
where $\erf(\cdot)$ is the error function defined as
\begin{align}
\erf(z) \equiv \frac{2}{\sqrt{\pi}} \int_0^z e^{-t^2} dt.
\label{eq:erf_def}
\end{align}
Step 
$(a)$ follows from $\cos^2{\phi} \ e^{-R^2 \sin^2{\phi}} \geq (1-\phi^2) e^{-R^2 \phi^2}$ 
(see Lemma \ref{lemma:lowerbounds}),
$(b)$ follows by using the definition of $\erf(\cdot)$ and \cite[3.321(5)]{Gradshtein2007}:
\begin{align}
\int_0^b t^2 e^{-a^2 t^2} dt
&= \frac{1}{2 a^3} \left[ \frac{\sqrt{\pi}}{2} \erf\left( a b \right) - a b \ e^{-a^2 b^2} \right]
\end{align}
and 
$(c)$ holds because $\erf(\cdot) \leq 1$ and the last term is non-negative.
Substituting back in (\ref{eq:Ecos_Phi_LB1}) yields
\begin{align}
\mathbb{E}\left[ \cos(\Phi) \right] 
&\geq \erf\left( \frac{\pi}{2} R \right)
- \frac{1}{2 R^2}
- \frac{\sqrt{\pi}}{4} \R \ e^{-\R^2} \\
&\geq 1 - e^{-\pi^2 R^2/4}
- \frac{1}{2 R^2}
- \frac{\sqrt{\pi}}{4} \R \ e^{-\R^2}
\label{eq:Ecos_Phi_LB}
\end{align}
where the last inequality
follows from the relations $\erf\left( {\pi R}/{2} \right) =  1 - \erfc\left( {\pi R}/{2} \right)$ 
and $\erfc\left( {\pi R}/{2} \right) \leq e^{-\pi^2 R^2/4}$ (see \cite[eq. (5)]{Chiani2002}).
Therefore, we have
\begin{align}
R^2 \mathbb{E}\left[ \cos(\Phi) \right]
&\geq R^2 - \frac{4}{\pi^2 e} - \frac{1}{2}
 - \frac{\sqrt{\pi}}{4} \left(\frac{3}{2 e}\right)^{3/2}
\label{eq:R2_Ecos_Phi_LB}
\end{align}
because 
\begin{align}
 \max_{x} x^m e^{-a x^2}  = \left( \frac{m}{2 a e} \right)^{m/2}
\label{eq:max_power_exp}
\end{align}
for any $a > 0 $ and positive integer $m$,
which implies that
$\R^3 e^{-\R^2} \leq \left({3}/{2 e}\right)^{3/2}$ and
$\R^2 e^{-\pi^2 \R^2/4} \leq {4}/{\pi^2 e}$.
We conclude that
\begin{align}
\mathbb{E}\left[ \cos(\Phi) \right]
\geq 1 - \frac{0.83}{R^2}
\geq 1 - \frac{1}{R^2}.
\label{eq:Ecos_Phi_LB2}
\end{align}

Finally, we have
\begin{align}
\mathbb{E}\left[ \sin(\Phi) \right] 
\equiv \int_{-\pi}^{\pi} \sin(\phi) \ p_{\Phi}(\phi) \ d\phi
= 0
\label{eq:Esin_Phi}
\end{align}
because the integrand is odd.

\end{proof}

\begin{lemma} \label{lemma:lowerbounds}
The following bounds hold.
\begin{enumerate}
\item 
For $z \geq 0$:
\begin{align}
\erfc(-z) \geq 2-e^{-z^2}.
\end{align}

\item 
For $a \geq 0$ and $ t \geq 0$:
\begin{align}
\cos^2{t} \ e^{-a \sin^2{t}} \geq (1-t^2) e^{-a t^2}.
\end{align}
\end{enumerate}
\end{lemma}

\begin{proof}
\begin{enumerate}
\item 
Since for any real number $z$
\begin{align}
\erf(-z) &= - \erf(z) \\
\erfc(z) &= 1-\erf(z)
\end{align}
we have
\begin{align}
\erfc(-z) = 1-\erf(-z) = 1+\erf(z) = 2-\erfc(z).
\end{align}
To complete the proof, we use \cite[eq. (5)]{Chiani2002}
\begin{align}
\erfc(z) \leq e^{-z^2} \ \text{for} \ z \geq 0.
\end{align}

\item	
The trigonometric identity $\cos^2{t} + \sin^2{t} = 1 $
and the bound $\sin{t} \leq t$ (for $t \geq 0$) imply that
\begin{align}
\cos^2{t} \geq 1-t^2.
\end{align} 

Since the exponential function is monotone,
we have
\begin{align}
e^{-a \sin^2{t}} \geq e^{-a t^2}.
\end{align}

By combining the two inequalities, we obtain
\begin{align}
\cos^2{t} \ e^{-a \sin^2{t}} 
\geq (1-t^2) e^{-a t^2}.
\end{align}
\end{enumerate}
\end{proof}




%

\appendices


\ifCLASSOPTIONcaptionsoff
  \newpage
\fi



%
\bibliographystyle{IEEEtran}
\bibliography{bibstrings_short,ieee_dissert4_short}

\begin{thebibliography}{10}
\providecommand{\url}[1]{#1}
\csname url@samestyle\endcsname
\providecommand{\newblock}{\relax}
\providecommand{\bibinfo}[2]{#2}
\providecommand{\BIBentrySTDinterwordspacing}{\spaceskip=0pt\relax}
\providecommand{\BIBentryALTinterwordstretchfactor}{4}
\providecommand{\BIBentryALTinterwordspacing}{\spaceskip=\fontdimen2\font plus
\BIBentryALTinterwordstretchfactor\fontdimen3\font minus
  \fontdimen4\font\relax}
\providecommand{\BIBforeignlanguage}[2]{{%
\expandafter\ifx\csname l@#1\endcsname\relax
\typeout{** WARNING: IEEEtran.bst: No hyphenation pattern has been}%
\typeout{** loaded for the language `#1'. Using the pattern for}%
\typeout{** the default language instead.}%
\else
\language=\csname l@#1\endcsname
\fi
#2}}
\providecommand{\BIBdecl}{\relax}
\BIBdecl

\bibitem{Demir2000}
A.~Demir, A.~Mehrotra, and J.~Roychowdhury, ``Phase noise in oscillators: a
  unifying theory and numerical methods for characterization,'' \emph{IEEE
  Trans. on Circ. and Sys. I: Fund. Theory and App.}, vol.~47, no.~5, pp.
  655--674, 2000.

\bibitem{CasiniDVBS2PN2004}
E.~Casini, R.~D. Gaudenzi, and A.~Ginesi, ``{DVB-S2 modem algorithms design and
  performance over typical satellite channels},'' \emph{Int. J. on Sat. Comm.
  and Net.}, vol.~22, no.~3, pp. 281--318, 2004.

\bibitem{DurisiICC2013}
G.~Durisi, A.~Tarable, and T.~Koch, ``{On the multiplexing gain of MIMO
  microwave backhaul links affected by phase noise},'' in \emph{IEEE Int. Conf.
  on Comm. (ICC)}, Budapest, Hungary, Jun. 2013.

\bibitem{TkachDFB1986}
R.~Tkach and A.~Chraplyvy, ``{Phase noise and linewidth in an InGaAsP DFB
  laser},'' \emph{J. Lightwave Tech.}, vol.~4, no.~11, pp. 1711--1716, 1986.

\bibitem{ViterbiPLL1963}
A.~Viterbi, ``{Phase-locked loop dynamics in the presence of noise by
  Fokker-Planck techniques},'' \emph{Proc. IEEE}, vol.~51, no.~12, pp.
  1737--1753, 1963.

\bibitem{Katz2004}
M.~Katz and S.~Shamai, ``{On the capacity-achieving distribution of the
  discrete-time noncoherent and partially coherent AWGN channels},'' \emph{IEEE
  Trans. Inf. Theory}, vol.~50, no.~10, pp. 2257--2270, Oct. 2004.

\bibitem{Hou2003PCAWGN}
P.~Hou, B.~Belzer, and T.~Fischer, ``{On the capacity of the partially coherent
  additive white Gaussian noise channel},'' in \emph{IEEE Int. Sym. Inf. Theory
  (ISIT)}, Yokohama, Japan, Jul. 2003, pp. 372--372.

\bibitem{LapidothPhaseNoise2002}
\BIBentryALTinterwordspacing
A.~Lapidoth, ``Capacity bounds via duality: A phase noise example,''
  \emph{Asian-Europian Workshop on Information Theory}, pp. 58--61, Jun. 2002.
\BIBentrySTDinterwordspacing

\bibitem{Dauwels2008}
J.~Dauwels and H.-A. Loeliger, ``Computation of information rates by particle
  methods,'' \emph{IEEE Trans. Inf. Theory}, vol.~54, no.~1, pp. 406--409, Jan.
  2008.

\bibitem{BarlettaLB}
L.~Barletta, M.~Magarini, and A.~Spalvieri, ``Estimate of information rates of
  discrete-time first-order {Markov} phase noise channels,'' \emph{IEEE
  Photonics Tech. Lett.}, vol.~23, no.~21, pp. 1582--1584, Nov. 2011.

\bibitem{Arnold2006}
D.~Arnold, H.-A. Loeliger, P.~Vontobel, A.~Kavcic, and W.~Zeng,
  ``Simulation-based computation of information rates for channels with
  memory,'' \emph{IEEE Trans. Inf. Theory}, vol.~52, no.~8, pp. 3498--3508,
  Aug. 2006.

\bibitem{BarlettaUB}
L.~Barletta, M.~Magarini, and A.~Spalvieri, ``The information rate transferred
  through the discrete-time {Wiener}'s phase noise channel,'' \emph{J.
  Lightwave Tech.}, vol.~30, no.~10, pp. 1480--1486, May 2012.

\bibitem{Barletta2012KalmanLB}
\BIBentryALTinterwordspacing
------, ``{A new lower bound below the information rate of Wiener phase noise
  channel based on Kalman carrier recovery},'' \emph{Optics Express}, vol.~20,
  no.~23, pp. 25\,471--25\,477, Nov. 2012.
\BIBentrySTDinterwordspacing

\bibitem{Barbieri2011}
A.~Barbieri and G.~Colavolpe, ``On the information rate and repeat-accumulate
  code design for phase noise channels,'' \emph{IEEE Trans. Commun.}, vol.~59,
  no.~12, pp. 3223--3228, Dec. 2011.

\bibitem{DurisiITA2013}
G.~Durisi, A.~Tarable, C.~Camarda, and G.~Montorsi, ``{On the capacity of MIMO
  Wiener phase-noise channels},'' in \emph{Inf. Theory and App. Workshop
  (ITA)}, Feb. 2013, pp. 1--7.

\bibitem{DurisiCommTrans2014}
G.~Durisi, A.~Tarable, C.~Camarda, R.~Devassy, and G.~Montorsi, ``{Capacity
  Bounds for MIMO Microwave Backhaul Links Affected by Phase Noise},''
  \emph{IEEE Trans. Commun.}, vol.~62, no.~3, pp. 920--929, Mar. 2014.

\bibitem{Goebel2011}
B.~Goebel, R.~Essiambre, G.~Kramer, P.~Winzer, and N.~Hanik, ``Calculation of
  mutual information for partially coherent {Gaussian} channels with
  applications to fiber optics,'' \emph{IEEE Trans. Inf. Theory}, vol.~57,
  no.~9, pp. 5720--5736, Sep. 2011.

\bibitem{LucaWhiteISIT2014}
L.~Barletta and G.~Kramer, ``On continuous-time white phase noise channels,''
  in \emph{IEEE Int. Sym. Inf. Theory (ISIT)}, Honolulu, HI, Jun. 2014.

\bibitem{LucaWhiteCROWN2014}
------, ``Signal-to-noise ratio penalties for continuous-time phase noise
  channels,'' \emph{Int. Conf. on Cog. Rad. Oriented Wireless Net. (CROWNCOM)},
  Jun. 2014.

\bibitem{Foschini1988IT}
G.~Foschini and G.~Vannucci, ``Characterizing filtered light waves corrupted by
  phase noise,'' \emph{IEEE Trans. Inf. Theory}, vol.~34, no.~6, pp.
  1437--1448, Nov. 1988.

\bibitem{Foschini1988Comm}
G.~Foschini, L.~Greenstein, and G.~Vannucci, ``Noncoherent detection of
  coherent lightwave signals corrupted by phase noise,'' \emph{IEEE Trans.
  Commun.}, vol.~36, no.~3, pp. 306--314, Mar. 1988.

\bibitem{FoschiniCom1989}
G.~Foschini, G.~Vannucci, and L.~Greenstein, ``Envelope statistics for filtered
  optical signals corrupted by phase noise,'' \emph{IEEE Trans. Commun.},
  vol.~37, no.~12, pp. 1293--1302, Dec. 1989.

\bibitem{Lapidoth2003}
A.~Lapidoth and S.~Moser, ``Capacity bounds via duality with applications to
  multiple-antenna systems on flat-fading channels,'' \emph{IEEE Trans. Inf.
  Theory}, vol.~49, no.~10, pp. 2426--2467, Oct. 2003.

\bibitem{GhozlanISIT2013}
H.~Ghozlan and G.~Kramer, ``{On Wiener phase noise channels at high
  signal-to-noise ratio},'' in \emph{IEEE Int. Sym. Inf. Theory (ISIT)},
  Istanbul, Turkey, Jul. 2013, pp. 2279--2283.

\bibitem{GhozlanGLOBECOM2013}
------, ``Multi-sample receivers increase information rates for wiener phase
  noise channels,'' in \emph{IEEE Global Telecom. Conf. (GLOBECOM)}, Atlanta,
  GA, Dec. 2013, pp. 1919--1924.

\bibitem{GhozlanISIT2014}
------, ``{Phase modulation in Wiener phase noise channels with oversampling at
  high SNR},'' in \emph{IEEE Int. Sym. Inf. Theory (ISIT)}, Honolulu, HI, Jun.
  2014.

\bibitem{PighiOversampling2007}
R.~Pighi and R.~Raheli, ``Information rates of oversampled detectors for
  transition-noise-limited digital storage systems,'' in \emph{IEEE Int. Sym.
  Inf. Theory (ISIT)}, Nice, France, Jun. 2007, pp. 2536--2540.

\bibitem{GilbertOversampling1993}
E.~Gilbert, ``Increased information rate by oversampling,'' \emph{IEEE Trans.
  Inf. Theory}, vol.~39, no.~6, pp. 1973--1976, 1993.

\bibitem{KochOversamp2010}
T.~Koch and A.~Lapidoth, ``Increased capacity per unit-cost by oversampling,''
  \emph{IEEE Convention of Electrical and Electronics Engineers in Israel
  (IEEEI)}, pp. 684--688, Nov. 2010.

\bibitem{DorpinghausOversamp2014}
M.~D\"{o}rpinghaus, G.~Koliander, G.~Durisi, E.~Riegler, and H.~Meyr,
  ``{Oversampling Increases the Pre-Log of Noncoherent Rayleigh Fading
  Channels},'' \emph{IEEE Trans. Inf. Theory}, vol.~60, no.~9, pp. 5673--5681,
  Sep. 2014.

\bibitem{MartaloBW2015}
R.~Martalo, A.~Gervasi, C.~Tripodi, and R.~Raheli, ``Information rate analysis
  of the oversampled phase-noise channel,'' \emph{in Proc. Intern. Symp.
  Wireless Commun. Sys. (ISWCS)}, Aug. 2015.

\bibitem{ScarlettPHD2015}
J.~Scarlett, \emph{Reliable communication under mismatched decoding}.\hskip 1em
  plus 0.5em minus 0.4em\relax Ph.D. Thesis, University of Cambridge, 2014.

\bibitem{BarlettaISIT2015}
L.~Barletta and G.~Kramer, ``Lower bound on the capacity of continuous-time
  {Wiener} phase noise channels,'' in \emph{IEEE Int. Sym. Inf. Theory (ISIT)},
  Hong Kong, Jun. 2015.

\bibitem{MTR2013}
M.~Martal\`{o}, C.~Tripodi, and R.~Raheli, ``On the information rate of phase
  noise-limited communications,'' in \emph{Inf. Theory and App. Workshop
  (ITA)}, 2013.

\bibitem{ShamaiCom91}
Y.~Dallal and S.~Shamai, ``An upper bound on the error probability of
  quadratic-detection in noisy phase channels,'' \emph{IEEE Trans. Commun.},
  vol.~39, no.~11, pp. 1635--1650, Nov. 1991.

\bibitem{ShamaiIT92}
------, ``Performance bounds for noncoherent detection under {Brownian} phase
  noise,'' \emph{IEEE Trans. Inf. Theory}, vol.~38, no.~2, pp. 362--379, Mar.
  1992.

\bibitem{Shamai93}
Y.~Dallal, G.~Jacobsen, and S.~Shamai, ``Analytical upper bounds for noisy
  phase optical communication invoking {Gaussian} quadratic functionals,''
  \emph{IEEE Photonics Tech. Lett.}, vol.~5, no.~7, pp. 855--858, Jul. 1993.

\bibitem{BarlettaITW2015}
\BIBentryALTinterwordspacing
L.~Barletta and G.~Kramer, ``Upper bound on the capacity of discrete-time
  {Wiener} phase noise channels,'' in \emph{IEEE Inf. Theory Workshop (ITW)},
  Jerusalem, Israel, Apr. 2015.
\BIBentrySTDinterwordspacing

\bibitem{MTR2016}
M.~Martal\`{o}, C.~Tripodi, and R.~Raheli, ``Simple upper bound on the
  information rate of the phase noise channel,'' \emph{Electron. Lett.},
  vol.~52, no.~7, pp. 517--519, Apr. 2016.

\bibitem{AzizogluPHD1991}
M.~Azizoglu, \emph{Phase Noise in Coherent Optical Communications}.\hskip 1em
  plus 0.5em minus 0.4em\relax Boston, MA, USA: Ph.D. Thesis, Massachusetts
  Institute of Technology, 1991.

\bibitem{Moser2012}
S.~Moser, ``Capacity results of an optical intensity channel with
  input-dependent {Gaussian} noise,'' \emph{IEEE Trans. Inf. Theory}, vol.~58,
  no.~1, pp. 207--223, Jan. 2012.

\bibitem{Mardia2000}
K.~V. Mardia and P.~E. Jupp, \emph{Directional Statistics}.\hskip 1em plus
  0.5em minus 0.4em\relax John Wiley and Sons Ltd., 2000.

\bibitem{GhozlanISIT2010}
H.~Ghozlan and G.~Kramer, ``Interference focusing for mitigating cross-phase
  modulation in a simplified optical fiber model,'' in \emph{IEEE Int. Sym.
  Inf. Theory (ISIT)}, Austin, TX, Jun. 2010, pp. 2033--2037.

\bibitem{ShamaiIT94}
Y.~Dallal and S.~Shamai, ``Power moment derivation for noisy phase lightwave
  systems,'' \emph{IEEE Trans. Inf. Theory}, vol.~40, no.~6, pp. 2099--2103,
  Nov. 1994.

\bibitem{Aldis1993}
J.~Aldis and A.~Burr, ``{The channel capacity of discrete time phase modulation
  in AWGN},'' \emph{IEEE Trans. Inf. Theory}, vol.~39, no.~1, pp. 184--185,
  Jan. 1993.

\bibitem{Gradshtein2007}
I.~Gradshtein and I.~Ryzhik, \emph{Table of integrals, series and
  products}.\hskip 1em plus 0.5em minus 0.4em\relax Academic Press, 2007.

\bibitem{Chiani2002}
M.~Chiani and D.~Dardari, ``{Improved exponential bounds and approximation for
  the Q-function with application to average error probability computation},''
  in \emph{IEEE Global Telecom. Conf. (GLOBECOM)}, vol.~2, Taipei, Taiwan, Nov.
  2002, pp. 1399 --1402.

\end{thebibliography}

%

%





\end{document}